\documentclass[11pt, epsf]{article}
\parskip 4pt plus 1pt
\usepackage{amsmath, amssymb, relsize, verbatim, graphicx, float, hyperref,titlesec,amsthm, boondox-cal, mathtools, nccmath, bm,geometry, indentfirst,enumitem}
\usepackage{color,soul}
\usepackage{setspace}
\topmargin      0.25truein
\oddsidemargin  -0.1truein
\evensidemargin -0.1truein
\textheight     8.5truein
\textwidth      6.5truein
\footskip       0.6truein
\headheight     0.0truein
\headsep        0.0truein
\parskip 4pt plus 1pt
\title{Optimal Signals and Detectors Based on Correlation and Energy\footnote{This article is part of the M.Sc. thesis of the first author.}}
\author{Yossi Marciano and Neri Merhav}
\newtheorem{theorem}{Theorem}
\newtheorem{lemma}{Lemma}

\setlist[enumerate]{label*=\arabic*.}
\geometry{total={210mm,297mm},
left=15mm,right=15mm,
bindingoffset=5mm, top=20mm,bottom=20mm}
\allowdisplaybreaks
\setcounter{secnumdepth}{4}
\titleformat{\paragraph}
{\normalfont\normalsize\bfseries}{\theparagraph}{1em}{}
\titlespacing*{\paragraph}
{0pt}{3.25ex plus 1ex minus .2ex}{1.5ex plus .2ex}
\DeclareMathAlphabet{\pazocal}{OMS}{zplm}{m}{n}
\setcounter{tocdepth}{4}
\setcounter{secnumdepth}{4}
\begin{document}
\maketitle
\begin{center}
The Andrew \& Erna Viterbi Faculty of Electrical and Computer Engineering\\
Technion - Israel Institute of Technology \\
Technion City, Haifa 3200003, ISRAEL \\
\end{center}
\vspace{1.5\baselineskip}
\setlength{\baselineskip}{1.5\baselineskip}

\begin{abstract}
In continuation of an earlier study, we explore a Neymann-Pearson hypothesis testing scenario where, under the null hypothesis ($\pazocal{H}_0$), the received signal is a white noise process $N_t$, which is not Gaussian in general, and under the alternative hypothesis ($\pazocal{H}_1$), the received signal comprises a deterministic transmitted signal $s_t$ corrupted by additive white noise, the sum of $N_t$ and another noise process originating from the transmitter, denoted as $Z_t$, which is not necessarily Gaussian either. Our approach focuses on detectors that are based on the correlation and energy of the received signal, which are motivated by implementation simplicity. We optimize the detector parameters to achieve the best trade-off between missed-detection and false-alarm error exponents. First, we optimize the detectors for a given signal, resulting in a non-linear relation between the signal and correlator weights to be optimized. Subsequently, we optimize the transmitted signal and the detector parameters jointly, revealing that the optimal signal is a balanced ternary signal and the correlator has at most three different coefficients, thus facilitating a computationally feasible solution.
\end{abstract}
\noindent
{\bf Index terms:} hypothesis testing, signal detection, correlation--detection, error exponent, linear programming.
\section{Introduction}
Detection of noise-corrupted signals is a topic of great interest in engineering and science, with many applications in the areas of communication
and signal processing. These include radar, sonar, light detection and ranging (LIDAR), object recognition in image and video streams, reliable communication and many more. These applications are integrated in platforms such as cell phones, cars, drones, smart watches, military equipment, etc. In many of these applications, the detector must obey low power and small area demands.

In the above-mentioned applications, the challenge of mismatch between the signal model and the detector design emerges frequently. This mismatch may stem from uncertainty in the signal model, or from restrictions on the detector design, as mentioned above. The need to place detectors in small sensors and mobile platforms, with severe constraints on power, weight, budget and size, motivates one to examine some classes of simpler detectors and find the optimal ones within these classes.

Considerable attention in the literature was given to the mismatch caused by the first reason, namely, model uncertainty, see \cite{BOR09}, \cite{GGFL98}, \cite{HLYC20}, \cite{LL19}, \cite{WLYTDY15}, just to name a few. However, the second cause described above remained relatively unattended, with the exceptions of references \cite{Neri21} and \cite{Neri22}. In both works, a Neyman-Pearson hypothesis testing problem of distinguishing between a null hypothesis, $\pazocal{H}_0$ and an alternative hypothesis $\pazocal{H}_1$, is addressed. A signal waveform $s_t$ is transmitted to detect a target by reflection (for example, in a radar system). Under $\pazocal{H}_0$ there is no target, thus the received signal $Y_t$ is a white Gaussian noise process, while under $\pazocal{H}_1$, $Y_t$ is the reflection of the transmitted signal $s_t$, corrupted by additional noise (which is not necessarily Gaussian). The sub-optimal decision rule examined in \cite{Neri21}, \cite{Neri22} is implemented by correlation detectors, namely, to compare the correlation to a certain threshold $\Theta$. If the correlation exceeds $\Theta$, the null hypothesis is rejected.

In \cite{Neri21}, the case of interest is optical detection, where the observed signal, $y(t)$, is a continuous waveform and under $\pazocal{H}_1$ models the output of a photo-diode. First, a stream of photo-electrons is generated by photons sensed by the diode in the time interval $[0,T)$. Then, the current generated by the diode is amplified by a trans-impedance amplifier. Finally, an additive white Gaussian thermal noise process corrupts the sensor's output. Throughout the scope of \cite{Neri21}, the transmitted signal model remains fixed, and the detector designer cannot optimize it. The correlation therein was obtained by calculating $\int_0^Tw(t)y(t)\mbox{d}t$, where $w(t)$ is a predefined continuous correlator waveform. In \cite{Neri22} the received signal $Y_t$ is a discrete-time signal, and the correlation is defined as $\sum_{t=1}^n w_tY_t$, where $\textbf{w}=(w_1,\dots,w_n)$ is a vector of real valued correlator coefficients. In both works, the optimal correlator configuration is found in the sense of maximizing the miss-detection (MD) error exponent subject to a given false-alarm (FA) error exponent. However, the approach in \cite{Neri22} is more general in some sense, since the statistics under $\pazocal{H}_1$ are not fully detailed as in \cite{Neri21}, and only general, reasonable assumptions are made regarding the noise processes in the problem. Also, the case where the transmitted signal is allowed to be optimized alongside the correlator weights was considered in \cite{Neri22}. An informal description of the problem is as follows: Consider the two hypotheses,
\begin{align}
\pazocal{H}_{0}\colon \qquad Y_{t} &= N_t\qquad 		t=1,\dots,n\\ 
\pazocal{H}_{1}\colon \qquad Y_{t} &= X_{t} + N_t\qquad t=1,\dots,n,
\end{align}
where ${X_t}$ is a random process decomposed as $X_{t}=s_{t}+Z_{t}$, with $s_{t}=\mathbb{E}\left [X_{t}\right ]$ being a deterministic waveform and $Z_{t}=X_{t}-s_{t}$ being an independently identically distributed (i.i.d.), zero-mean noise process with a known probability density function (PDF) $f_{\mbox{\tiny{Z}}}(\cdot)$. This noise component is regarded as signal induced noise (SIN) and it is not necessarily Gaussian. The first part of \cite{Neri22} assumes a given, predefined model of the received signal, and in the second part, the case where $\left \{w_t\right \}$ and $\left \{s_t\right \}$ is are optimized jointly is analysed. Anyhow, throughout the scope of the \cite{Neri22}, $N_t$ is assumed to be a white Gaussian noise process.

In the current study, the scope of in \cite{Neri22} is extended by relaxing the assumption that $N_t$ is Gaussian. Instead, $N_{t}$ is assumed to have a general PDF, following the same assumptions as those of $Z_t$. Similarly as in \cite{Neri21} and \cite{Neri22}, we say that a detector is asymptotically optimal (within a certain class) if for a prescribed FA error exponent, its MD error exponent is maximal.
With this new setting, the following issues were investigated:
\begin{enumerate}
\item
For a fixed, pre-defined signal, the optimal detector based on computing the correlation was derived as a (non-linear) function of the transmitted signal. 
\item
The transmitted signal and the correlator weights were jointly optimized. This yielded an optimal signal $\left \{s_t\right \}$ which is quite simple, since it has only two parameters to be optimized. The correlator weights $\left \{w_t\right \}$ turned out to be proportional to $\left \{s_t\right \}$, even though the problem is not Gaussian in general. These results improve the ones presented in \cite{Neri22} regarding joint signal-correlator optimization, since it was discovered that the optimal signal-correlator combination is much simpler than in \cite{Neri22}, with only four parameters to be optimized.
\item
Most importantly, the analysis performed in items 1 and 2 above was derived fully for the test statistic $\pazocal{T}\triangleq\sum_t \left (w_tY_t+\gamma Y_t^2\right )$ (linear combination of correlation and energy), where $\gamma\in\mathbb{R}$ was subjected to optimization. We are motivated to examine this detector since under $\pazocal{H}_1$, the observed signal has a higher energy as well as correlation with the transmitted signal and so, it is conceivable that the energy of the noisy received signal would bear information that is relevant to distinguish  between the two hypotheses. In fact, if both $N_t$ and $Z_t$ are Gaussian, then the likelihood-ratio test (LRT) statistic is indeed of the form of a linear combination of correlation and energy. Also, the complexity of $\pazocal{T}$ is not much worse than the complexity when setting $\gamma=0$, if $\pazocal{T}$ is written as $\pazocal{T}=\sum_t  Y_t\cdot\left (w_t+\gamma Y_t\right )$. For a prescribed signal $\left \{s_t\right \}$, the correlator weights $\{w_t\}$ are given as a non linear function of $\{s_t\}$, similarly to item 1. When joint signal-detector optimization was carried out, $\left \{w_t\right \}$ and $\left \{s_t\right \}$ did not sustain a linear relation in general. Interestingly, the favorable property that the set $\{w_t,s_t\}$ has a simple structure was preserved, with just three parameters to be optimized, which makes a numeric solution feasible. The full derivation for this kind of detector was not done in \cite{Neri22}.
\end{enumerate}

The outline of the paper is as follows. In Section \ref{Section2}, we provide a formal presentation of the problem. In Section \ref{Section3}, we extend the optimal correlator presented in \cite{Neri22} for non-Gaussian noises, both for a given signal and for jointly-optimal signal and detector. Finally, in Section \ref{Section4}, the detector based on combination of energy and correlation will be derived fully.
\section{Problem Formulation and Preliminaries}\label{Section2}
The hypothesis testing problem outlined in the Introduction is now elaborated upon. Each entry of the random process $\left \{N_t\right \}$ (resp. $\left \{Z_t\right \}$) is an independent copy of a zero-mean random variable (RV) $N$ (resp. $Z$) with a PDF $f_{\mbox{\tiny{N}}}(\cdot)$ (resp. $f_{\mbox{\tiny{Z}}}(\cdot)$), which is symmetric around the origin. The assumption regarding the symmetric distribution of $N$ and $Z$ is made for the sake of convenience and simplicity, and not due to a principal hurdle to tackle the non-symmetric case. Also, the symmetry of noise processes is conceivable as a realistic and reasonable assumption. The RVs $N$ and $Z$ are independent, with cumulant-generating functions (CGFs) $C_{\mbox{\tiny{N}}}(\cdot)$ and $ C_{\mbox{\tiny{Z}}}(\cdot)$, respectively
\begin{equation}\label{CGF}
C_{\mbox{\tiny{N}}}(v)\triangleq \ln\mathbb{E}\left [e^{vN}\right ];\;C_{\mbox{\tiny{Z}}}(v)\triangleq \ln\mathbb{E}\left [e^{vZ}\right ].
\end{equation}
A joint CGF of the random vectors $(V,V^2)$ (resp. $(N,N^2)$) is defined for future use as $\tilde{C}_{\mbox{\tiny{V}}}(x,y)$ and $\tilde{C}_{\mbox{\tiny{N}}}(x,y)$, respectively
\begin{equation}\label{JCGF}
\tilde{C}_{\mbox{\tiny{V}}}(x,y)\triangleq \ln\mathbb{E}\left [e^{xV+yV^2}\right ]; \; \tilde{C}_{\mbox{\tiny{N}}}(x,y)\triangleq \ln\mathbb{E}\left [e^{xN+yN^2}\right ].
\end{equation}
where $V\triangleq N+Z$.

We assume that \eqref{CGF} and \eqref{JCGF} are finite and twice differentiable, at least in a certain domain around the origin. It is noted that for an arbitrary RV $J$ with a symmetric PDF around the origin, we have $C_{\mbox{\tiny{J}}}(v)=C_{\mbox{\tiny{J}}}(-v)$ and $\tilde{C}_{\mbox{\tiny{J}}}(x,y)=\tilde{C}_{\mbox{\tiny{J}}}(-x,y)$. It is also straightforward to verify that both $\tilde{C}_{\mbox{\tiny{J}}}$ and $C_{\mbox{\tiny{J}}}$ are strictly convex functions, unless $J$ is a degenerated RV.

The observed signal is a vector, $(Y_1, Y_2,\dots, Y_n)\in\mathbb{R}^n$, the correlator weight vector, $\textbf{w}\triangleq (w_1, w_2,\dots, w_n)\in\mathbb{R}^n$, the signal transmitted is $\textbf{s}\triangleq (s_1,s_2,\dots, ,s_n)\in\mathbb{R}^n$ and we assume that the amplitudes of $\textbf{w}$ and $\textbf{s}$ are bounded. The test statistic, $\pazocal{T}\triangleq\sum_{t=1}^n\left ( w_tY_t+\gamma Y_t^2\right )$ is compared to a threshold $\Theta\triangleq n\theta$, where $\theta$ is a constant, independent of $n$, and $\gamma\ge 0$ is a coefficient subjected to optimization. If $\pazocal{T}$ exceeds $\Theta$, the null hypothesis is rejected, otherwise, it is accepted. Throughout the  paper, our goal is to find asymptotically optimal detectors in the sense of minimizing the MD probability, $P_{\mbox{\tiny{MD}}}$,
\begin{equation}
P_{\mbox{\tiny{MD}}}\triangleq \Pr\left \{ \pazocal{T}< \Theta |\pazocal{H}_1 \right \},
\end{equation}
subject to a constraint on the maximal allowable FA probability, $P_{\mbox{\tiny{FA}}}$,
\begin{equation}
P_{\mbox{\tiny{FA}}}\triangleq \Pr\left \{ \pazocal{T}\ge \Theta |\pazocal{H}_0 \right \},
\end{equation}
as $n$ tends to infinity, where $\Pr\{\cdot|\cdot \}$ is the conditional probability. More precisely, define the MD error exponent as 
\begin{equation}
E_{\mbox{\tiny{MD}}}\triangleq \lim_{n\to\infty}\left \{-\frac{1}{n}\ln\left (P_{\mbox{\tiny{MD}}}\right )\right \},
\end{equation}
and the FA error exponent,
\begin{equation}
E_{\mbox{\tiny{FA}}}\triangleq \lim_{n\to\infty}\left \{-\frac{1}{n}\ln\left (P_{\mbox{\tiny{FA}}}\right )\right \},
\end{equation}
provided that theses two limits exist. We wish to maximize the MD error exponent for a prescribed FA error exponent. With a slight abuse of notation, we will occasionally express $E_{\mbox{\tiny{FA}}}$ and $E_{\mbox{\tiny{MD}}}$ as functions of different variables, depending on the context.

In the first part of the article, we examine linear detectors, namely, $\pazocal{T}=\sum_{t=1}^n w_tY_t$ ($\gamma$=0). For a given $\textbf{s}$, we aim to find $\textbf{w}$ which maximizes $E_{\mbox{\tiny{MD}}}$ for a given $E_{\mbox{\tiny{FA}}}$. Next, we allow the signal and the detector to be optimized jointly for all signals with $P(\textbf{s})\leq P_{\mbox{\tiny{s}}}$, where
\begin{equation}\label{PowerConstraint}
P(\textbf{s})\triangleq \frac{1}{n}\sum_{t=1}^n s_t^2,
\end{equation}
and $P_{\mbox{\tiny{s}}} > 0$ is a given constant. Our goal is to find the optimal combination of $\textbf{s}$ and $\textbf{w}$ in terms of the trade-off between the FA exponent and the MD exponent. In the second part of the paper, the class of detectors is expanded to include detectors based on linear combinations of correlation and energy, namely, the energy coefficient $\gamma$ is allowed to be optimized. Similarly to the derivation of the linear detector, our goal here is to find the optimal $\textbf{w}$ per given $\textbf{s}$, and then find the optimal pair ($\textbf{w},\textbf{s}$) which obeys $P(\textbf{s})\leq P_{\mbox{\tiny{s}}}$ (per defined FA error exponent). As we derive solutions for the problems listed above, their unique properties and structure are characterized with the intention of finding optimal solutions of the simplest form. These beneficial attributes are then exploited to efficiently solve several numerical examples, which we compare to other methods.
\section{Detectors Based On Correlation}\label{Section3}
In this section, we focus on the class of correlation detectors ($\gamma=0$), in correspondence to items 1 and 2 in the Introduction. The MD error probability associated with the correlator is upper bounded by the Chernoff bound\footnote{The Chernoff bound is well-known to be exponentially tight as $n$ tends to infinity, see \cite{ChernoffBound}.}:
\begin{flalign}\label{P_MD}
P_{\mbox{\tiny{MD}}}&=\Pr\left\{\sum_{t=1}^nw_t(s_t+Z_t+N_t)<\theta n\right\}
& \nonumber \\
& \le\inf_{\lambda\ge 0}\left \{\mathbb{E}\left[\exp\left(\lambda\theta n-\lambda\sum_{t=1}^nw_t(s_t+Z_t+N_t)\right)\right]\right \}
& \nonumber \\
&=\inf_{\lambda\ge 0}\left \{\exp\left(\lambda\theta
n-\lambda\sum_{t=1}^nw_ts_t\right)\cdot\mathbb{E}\left [\exp\left(-\lambda\sum_{t=1}^nw_tZ_t\right)\right ]\cdot
\mathbb{E}\left [\exp\left(-\lambda\sum_{t=1}^nw_tN_t\right)\right ]\right \}
& \nonumber \\
&= \inf_{\lambda\ge 0}\left \{\exp\left(\lambda\theta
n-\lambda\sum_{t=1}^nw_ts_t\right)\cdot\prod_{t=1}^n
\mathbb{E}\left [\exp\left (-\lambda w_tZ_t\right )\right ]\cdot\prod_{t=1}^n
\mathbb{E}\left [\exp\left (-\lambda w_tN_t\right )\right ]\right \}
& \nonumber \\
&= \inf_{\lambda\ge 0}\left \{\exp\left(\lambda\theta
n-\lambda\sum_{t=1}^nw_ts_t\right)\cdot\prod_{t=1}^n
e^{C_{\mbox{\tiny{Z}}}\left (-\lambda w_t\right )}\cdot\prod_{t=1}^n
e^{C_{\mbox{\tiny{N}}}\left (-\lambda w_t\right )}\right \}
& \nonumber \\
&\stackrel{(a)}{=} \inf_{\lambda\ge 0}\left \{\exp\left(\lambda\theta
n-\lambda\sum_{t=1}^nw_ts_t\right)\cdot\prod_{t=1}^n
e^{C_{\mbox{\tiny{Z}}}\left (\lambda w_t\right )}\cdot\prod_{t=1}^n
e^{C_{\mbox{\tiny{N}}}\left (\lambda w_t\right )}\right \}
& \nonumber \\
&= \exp\left\{\inf_{\lambda\ge 0}\left [\lambda\theta
n-\lambda\sum_{t=1}^n w_ts_t+\sum_{t=1}^n
C_{\mbox{\tiny{Z}}}\left (\lambda w_t\right )+C_{\mbox{\tiny{N}}}\left (\lambda w_t\right )\right ]\right\}
& \nonumber \\
&= \exp\left\{\inf_{\lambda\ge 0}\left [\lambda\theta
n-\lambda\sum_{t=1}^n w_ts_t+\sum_{t=1}^n
C_{\mbox{\tiny{V}}}\left (\lambda w_t\right )\right ]\right\}
& \nonumber \\
&= \exp\left\{-\sup_{\lambda\ge 0}\left [-\lambda\theta
n+\lambda\sum_{t=1}^n w_ts_t-\sum_{t=1}^n
C_{\mbox{\tiny{V}}}\left (\lambda w_t\right )\right ]\right\},
\end{flalign}
where (a) is due to the symmetry of $C_{\mbox{\tiny{N}}}(\cdot )$ and $C_{\mbox{\tiny{Z}}}(\cdot )$ around the origin.

The MD error exponent is therefore given by
\begin{equation}\label{E_MD_lim}
E_{\mbox{\tiny{MD}}}=\lim_{n\to\infty}\left \{\sup_{\lambda\ge 0}\left [-\lambda\theta
+\frac{1}{n}\lambda\sum_{t=1}^n w_ts_t-\frac{1}{n}\sum_{t=1}^n
C_{\mbox{\tiny{V}}}\left (\lambda w_t\right )\right ]\right \},
\end{equation}
provided that the limit exists. For this limit to exist, we assume the following, similarly as in \cite{Neri22}: the pairs $\left \{\left (w_t,s_t\right )\right \}_{t=1}^n$ have an empirical joint PDF that tends to a certain asymptotic $f_{\mbox\tiny{{WS}}}(w,s)$ as $n\longrightarrow\infty$ and so, \eqref{E_MD_lim} exists. More precisely, we assume that
\begin{equation}\label{assumption2}
E_{\mbox{\tiny{MD}}}(\lambda)\triangleq\lim_{n\to\infty}\left \{-\lambda\theta
+\frac{1}{n}\lambda\sum_{t=1}^n w_ts_t-\frac{1}{n}\sum_{t=1}^n
C_{\mbox{\tiny{V}}}\left (\lambda w_t\right )\right \}=\mathbb{E}\left [\lambda(WS-\theta)-C_{\mbox{\tiny{V}}}\left (\lambda W\right )\right ],
\end{equation}
exists and the convergence in \eqref{assumption2} is uniform in $\lambda$. Here, the expectation is with respect to (w.r.t.) $f_{\mbox\tiny{{WS}}}$. The MD error exponent in \eqref{E_MD_lim} may be presented as 
\begin{equation}
E_{\mbox{\tiny{MD}}}=\sup_{\lambda\ge 0}\left \{E_{\mbox{\tiny{MD}}}(\lambda)\right \}.
\end{equation}
The FA error exponent for the correlator can be derived similarly. First, the FA error probability is upper bounded by the Chernoff bound,
\begin{flalign}\label{P_FA}
P_{\mbox{\tiny{FA}}} &=\Pr\left\{ \sum_{t=1}^{n}w_{t}N_{t}\geq\theta{n} \right\}
& \nonumber \\
&\leq\inf_{\alpha\geq{0}}\left\{ e^{-\alpha\theta{n}}\prod_{t=1}^{n}e^{C_{\mbox{\tiny{N}}}(\alpha{w_{t}})} \right\}
& \nonumber \\
&=\inf_{\alpha\geq{0}}\left\{ \exp\left[ -\alpha\theta{n} + \sum_{t=1}^{n}C_{\mbox{\tiny{N}}}(\alpha{w_{t}}) \right] \right\}
& \nonumber \\
&=\exp\left\{ -\sup_{\alpha\geq{0}}\left[ \alpha\theta n - \sum_{t=1}^{n}C_{\mbox{\tiny{N}}}(\alpha{w_{t}}) \right] \right\},
\end{flalign}
which yields a FA error exponent as follows,
\begin{equation}\label{FA_exp}
E_{\mbox{\tiny{FA}}}= \lim_{n\to\infty}\left \{\sup_{\alpha\geq{0}}\left[ \alpha\theta - \frac{1}{n}\sum_{t=1}^n C_{\mbox{\tiny{N}}}(\alpha w_t)\right]\right \},
\end{equation}
where the limit in \eqref{FA_exp} exists under a similar assumption regarding the asymptotic PDF, $f_{\mbox\tiny{{WS}}}$. Again, we assume that
\begin{equation}\label{assumption1}
E_{\mbox{\tiny{FA}}}(\alpha)\triangleq \lim_{n\to\infty}\left \{ \alpha\theta - \frac{1}{n}\sum_{t=1}^n C_{\mbox{\tiny{N}}}(\alpha w_t)\right\}=\alpha\theta-\mathbb{E}\left [C_{\mbox{\tiny{N}}}\left (\alpha W\right )\right ],
\end{equation}
exists the convergence is uniform in $\alpha$. Due to the uniform convergence of $E_{\mbox{\tiny{FA}}}(\alpha)$,  \eqref{FA_exp} is equivalent to
\begin{equation}
E_{\mbox{\tiny{FA}}}= \sup_{\alpha\geq{0}}\left\{ E_{\mbox{\tiny{FA}}}(\alpha)\right \}.
\end{equation}
\subsection{Optimal Correlator for a Given Signal}
We now address the problem of finding the asymptotically optimal correlator for a given signal, which is equivalent to finding the optimal conditional PDF, $f_{W|S}$, such that, 
\begin{equation}
\mathbb{E}\left [\lambda(WS-\theta)-C_{\mbox{\tiny{V}}}\left (\lambda W\right )\right ]= \int_{-\infty}^{\infty}f_{\mbox{\tiny{S}}}(s)\cdot\mathbb{E}\left [\lambda{sW} - C_{\mbox{\tiny{V}}}(\lambda{W})|S=s\right ]\mbox{d}s -\lambda\theta,
\end{equation}
is maximal w.r.t. $f_{W|S}$, subject to the FA error constraint, 
\begin{equation}\label{19}
E_{\mbox{\tiny{FA}}} \geq E_{\mbox{\tiny{0}}},
\end{equation}
where $E_{\mbox{\tiny{0}}}\ge 0$ is the prescribed value of the minimum required FA error exponent. The optimal density $f_{W|S}$ is characterised by Theorem \ref{Theorem1} for a fixed $\lambda$, which is eventually optimized as well.
\begin{theorem}\label{Theorem1}
Let the assumptions of Section \ref{Section2}, Section \ref{Section3} hold and let $N$ be a non-degenerate RV. Define the function $g$ as 
\begin{equation}
g(w;\alpha, \varphi, \lambda)\triangleq\dot{C}_{\mbox{\tiny{V}}}\left (\lambda w\right )+\frac{\mathbb{\varphi}\alpha}{\lambda}\dot{C}_{\mbox{\tiny{N}}}\left (\alpha w\right ),
\end{equation}
where $\dot{C}_{\mbox{\tiny{V}}}$ and $\dot{C}_{\mbox{\tiny{N}}}$ denote the derivatives of $C_{\mbox{\tiny{V}}}$ and $C_{\mbox{\tiny{N}}}$, respectively. Further assume that there exists $\varphi\ge 0$ (possibly, dependent on $\lambda$ and $\alpha$) that satisfies:
\begin{equation}\label{21a}
\mathbb{E}\left [\alpha C_{\mbox{\tiny{N}}}\left (g^{-1}\left (S;\alpha, \varphi, \lambda\right )\right )\right ]=\alpha\theta-E_{\mbox{\tiny{0}}},
\end{equation}
where $g^{-1}$ is the inverse function of $g$ (as a function $w$). If no such $\varphi$ exists, set $\varphi=0$. The optimal conditional PDF is given by
\begin{equation}
f^*_{W|S}(w|s)=\delta\left (w-g^{-1}\left (s; \alpha, \varphi, \lambda\right )\right ),
\end{equation}
where $\delta(\cdot )$ is the Dirac delta function. and 
\begin{equation}
 \alpha=\arg\max_{\alpha\ge 0}\left \{\mathbb{E}\left  [\lambda{g}^{\mbox{\tiny{-1}}}\left (S;\alpha,\varphi,\lambda\right)\cdot S-{C}_{\mbox{\tiny{V}}}\left (\lambda {g}^{\mbox{\tiny{-1}}}\left (S;\alpha,\varphi,\lambda\right)\right )\right  ]\right \}
\end{equation} 
\end{theorem}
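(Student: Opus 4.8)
The plan is to read the problem as an infinite-dimensional linear program over the conditional law $f_{W|S}$ and to exploit its extreme-point structure (the argument follows the lines of the Gaussian treatment in \cite{Neri22}, now with the general CGFs $C_{\mbox{\tiny{V}}},C_{\mbox{\tiny{N}}}$ in place of their Gaussian specializations). First I would rewrite the feasible set. Since $E_{\mbox{\tiny{FA}}}=\sup_{\alpha\ge 0}\{\alpha\theta-\mathbb{E}[C_{\mbox{\tiny{N}}}(\alpha W)]\}$, the requirement $E_{\mbox{\tiny{FA}}}\ge E_{\mbox{\tiny{0}}}$ holds if and only if there is \emph{some} $\alpha\ge 0$ with $\mathbb{E}[C_{\mbox{\tiny{N}}}(\alpha W)]\le\alpha\theta-E_{\mbox{\tiny{0}}}$. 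Hence, for a fixed $\lambda\ge 0$,
\[
\max_{f_{W|S}:\,E_{\mbox{\tiny{FA}}}\ge E_{\mbox{\tiny{0}}}}E_{\mbox{\tiny{MD}}}(\lambda)
=\max_{\alpha\ge 0}\ \max_{f_{W|S}:\,\mathbb{E}[C_{\mbox{\tiny{N}}}(\alpha W)]\le\alpha\theta-E_{\mbox{\tiny{0}}}}E_{\mbox{\tiny{MD}}}(\lambda),
\]
so it suffices to solve, for each fixed $(\alpha,\lambda)$, the inner problem in which the FA constraint has collapsed to a single inequality that is linear in $f_{W|S}$, and then to maximize the attained value over $\alpha\ge 0$ --- this outer maximization is precisely the $\arg\max$ in the theorem (the additive constant $-\lambda\theta$ being immaterial to it).

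For the inner problem, both the objective $\mathbb{E}[\lambda WS-C_{\mbox{\tiny{V}}}(\lambda W)]-\lambda\theta$ and the constraint functional are linear in $f_{W|S}$, so I would attach a multiplier $\varphi\ge 0$ to the constraint and study the dual inner maximization $\sup_{f_{W|S}}\mathbb{E}\big[\lambda WS-C_{\mbox{\tiny{V}}}(\lambda W)-\varphi C_{\mbox{\tiny{N}}}(\alpha W)\big]$. Because the expectation is taken through the fixed $f_{\mbox{\tiny{S}}}$ and the free $f_{W|S}$, this supremum decouples over $s$: for each $s$ one maximizes $h_s(w)\triangleq\lambda sw-C_{\mbox{\tiny{V}}}(\lambda w)-\varphi C_{\mbox{\tiny{N}}}(\alpha w)$ over the law of $W$ given $S=s$. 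Since $C_{\mbox{\tiny{V}}}$ and $C_{\mbox{\tiny{N}}}$ are strictly convex and $\varphi\ge 0$, $h_s$ is strictly concave, so by Jensen its supremum over conditional laws equals $\sup_w h_s(w)$ and is attained only by the point mass at the unique stationary point of $h_s$; solving $h_s'(w)=0$ gives $s=\dot{C}_{\mbox{\tiny{V}}}(\lambda w)+\frac{\varphi\alpha}{\lambda}\dot{C}_{\mbox{\tiny{N}}}(\alpha w)=g(w;\alpha,\varphi,\lambda)$. Strict convexity of $C_{\mbox{\tiny{V}}}$, inherited from the non-degeneracy of $N$ and hence of $V=N+Z$, makes $g$ strictly increasing and thus invertible, so the optimal atom is $w=g^{-1}(s;\alpha,\varphi,\lambda)$ and $f^*_{W|S}(w|s)=\delta\!\left(w-g^{-1}(s;\alpha,\varphi,\lambda)\right)$.

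It then remains to pin down $\varphi$ and to certify global optimality. For $\varphi$ I would invoke complementary slackness: if the $\varphi=0$ (unconstrained) solution already satisfies the FA inequality, keep $\varphi=0$; otherwise the constraint must be active, i.e. $\alpha\theta-\mathbb{E}[C_{\mbox{\tiny{N}}}(\alpha W^{*})]=E_{\mbox{\tiny{0}}}$, which is the condition determining $\varphi$ in \eqref{21a}, and the existence of an admissible $\varphi\ge 0$ follows from continuity and monotonicity of $\varphi\mapsto\mathbb{E}[C_{\mbox{\tiny{N}}}(\alpha g^{-1}(S;\alpha,\varphi,\lambda))]$ whenever the subproblem is feasible; when no admissible $\varphi$ exists, one keeps $\varphi=0$, exactly as stated. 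Finally, since the inner problem is an infinite-dimensional linear program with a single scalar constraint, weak duality is immediate and strong duality (zero gap) holds under the standing regularity of Sections~\ref{Section2}--\ref{Section3} (finite, twice-differentiable CGFs; bounded amplitudes) together with a Slater-type feasibility point, so the primal-feasible Dirac law attaining the dual optimum is globally optimal; maximizing its $E_{\mbox{\tiny{MD}}}(\lambda)$ value over $\alpha\ge 0$ yields the displayed characterization.

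I expect the principal obstacle to be the rigorous version of the two ``soft'' passages above: (i) showing that the decoupled supremum over conditional laws is genuinely attained at a \emph{measurable} Dirac selection $s\mapsto g^{-1}(s;\alpha,\varphi,\lambda)$ --- an extreme-point/measurable-selection argument leaning on the amplitude bounds and on finiteness of the CGFs to guarantee an interior maximizer of $h_s$ and that $g$ is a bijection onto the relevant range of $s$; and (ii) establishing the absence of a duality gap for the infinite-dimensional program, together with the well-posedness of the outer $\arg\max$ over $\alpha$ (continuity of the subproblem value in $\alpha$ and existence of a maximizer via compactness from the amplitude bounds).
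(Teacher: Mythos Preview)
Your proposal is correct and follows essentially the same Lagrangian route as the paper (which proves Theorem~\ref{Theorem1} as the $\gamma=0$ specialization of Theorem~\ref{Theorem2}): both attach a multiplier to the FA constraint, decouple the resulting unconstrained maximization over $s$, and identify the Dirac optimizer at $w=g^{-1}(s)$ via strict concavity of $h_s$ and strict monotonicity of $g$. The only cosmetic difference is the order of operations---you pull $\sup_\alpha$ outside via the ``$\exists\,\alpha$'' equivalence before dualizing, whereas the paper dualizes first and then interchanges $\inf_\rho/\sup_\alpha$ and $\sup_{f_{W|S}}/\inf_\rho$ using linearity and convexity; the duality-gap and measurable-selection concerns you flag are handled in the paper precisely by these minimax swaps together with the explicit closed form $g^{-1}(s)$.
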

\begin{proof}
Theorem \ref{Theorem1} is a special case of Theorem \ref{Theorem2} below, where $\gamma=0$. See the proof therein.
\end{proof}
Two remarks are now in place:
\begin{enumerate}
\item
A similar theorem to Theorem \ref{Theorem1} was derived in \cite{Neri22} (Theorem 1 therein), except it was assumed that $\left \{N_t\right \}$ was a white Gaussian noise process, and so, Theorem 1 of \cite{Neri22} is a special case of Theorem \ref{Theorem1} herein.
\item
The existence of $\varphi>0$ which complies to \eqref{21a} is reasonable, since $\varphi$ is the Lagrange multiplier of the FA error constraint \eqref{19}. An optimal solution will likely sustain the constraint in equality, implying that $\varphi>0$.
\end{enumerate}
\noindent\textbf{Example 1.}\label{Example1}
To better understand the results thus far, a numerical example is presented and solved. In the problem of interest, $\left \{s_t\right \}$ is a 4-ASK signal, $s_t\in\left \{-3a,-a,a,3a\right \}$, where $a$ is a given positive constant. Let us set: $s_t=\pm a$ along half of the time and $s_t=\pm 3a$ along the other half. Due to the symmetric distributions of $N$ and $V$ around the origin, the optimal correlator weight vector $\{w^*_t\}$ is a finite-alphabet signal as well, with the same alphabet size as $\{s_t\}$ and a similar structure, namely  $w_t\in\{-\beta,-\delta,\delta,\beta\}$, where $0<\delta<\beta$ are the correlator weights to be optimized. In this example, we assume that $Z$ is a binary symmetric random variable, taking values $\pm z_0$ for some $z_0>0$ and $N$ has a Laplace distribution 
\begin{equation}
f_{\mbox{\tiny{N}}}(v)=\frac{q}{2}\cdot e^{-q\cdot |v|},
\end{equation}
where $q>0$ is a given scale parameter. The CGF of $Z$ is given by 
\begin{align}
C_{\mbox{\tiny{Z}}}(v) = \ln\cosh(z_0v),
\end{align}
and the CGF of $N$ is given by 
\begin{equation}
C_{\mbox{\tiny{N}}}(v)=-\ln\left(1-\frac{v^2}{q^2}\right), \quad |v|<\frac{q^2}{v^2}
\end{equation}
First, we verify that the error exponents in the problem are well defined. This requires that $E_{\mbox{\tiny{FA}}}(\alpha)$ (resp. $E_{\mbox{\tiny{MD}}}(\lambda)$) defined in \eqref{assumption1} (resp. \eqref{assumption2}) converge uniformly w.r.t $\alpha$ (resp. $\lambda$). To show that the FA converges uniformly over the interval $\alpha\ge 0$, we assume without essential loss of generality that $n$ is even and consider that
\begin{flalign}\label{27}
\alpha\theta - \frac{1}{n}\sum_{t=1}^n C_{\mbox{\tiny{N}}}(\alpha w_t)\stackrel{(a)}{=}\alpha\theta - \frac{1}{2}C_{\mbox{\tiny{N}}}(\alpha \beta)-\frac{1}{2}C_{\mbox{\tiny{N}}}(\alpha \delta)=E_{\mbox{\tiny{FA}}}(\alpha),
\end{flalign}
where (a) follows since the correlator weights are matched to the signal. Since $E_{\mbox{\tiny{FA}}}(\alpha)$ is independent of $n$, the uniform convergence is trivial. Similarly, it is easy to show that the MD error exponent in this example complies with our assumptions.

Since the size of the alphabet of $\{w_t\}$ is small and its form is symmetric, we opt to optimize $\{w_t\}$ directly subject to the FA error constraint. In the setting described, the MD error exponent takes the form 
\begin{align}\label{24a}
E_{\mbox{\tiny{MD}}}(E_{\mbox{\tiny{0}}}, f_{\mbox{\tiny{S}}})=\frac{1}{2}\cdot \sup_{\lambda\geq 0}\max_{\left\{\delta,\,\beta:\; E_{\mbox{\tiny{FA}}}\geq E_{\mbox{\tiny{0}}}, \;0<\delta<\beta \right\}}\bigg\{\lambda a\delta+3\lambda a\beta - C_{\mbox{\tiny{V}}}(\lambda\delta)- C_{\mbox{\tiny{V}}}(\lambda\beta)-2\lambda\theta\bigg\},
\end{align}
where the FA error exponent can be written more compactly as
\begin{equation}\label{25a}
E_{\mbox{\tiny{FA}}}= \frac{1}{2}\cdot \sup_{\alpha\geq{0}}\left\{2\alpha\theta -C_{\mbox{\tiny{N}}}(\alpha\delta)-C_{\mbox{\tiny{N}}}(\alpha\beta) \right\}.
\end{equation}
We chose to tackle the  FA constraint by substituting it into the objective function. For a feasible solution, there exists $\alpha\ge 0$ such that the FA constraint is kept, namely
\begin{align}
\nonumber \alpha\theta-\mathbb{E}\left [C_{\mbox{\tiny{N}}}(\alpha W)\right ]\ge E_{\mbox{\tiny{0}}} \\
\theta\ge \frac{1}{\alpha}\cdot \left [\mathbb{E}\left [C_{\mbox{\tiny{N}}}(\alpha W)\right ]+E_{\mbox{\tiny{0}}}\right ]
\end{align}
The optimal $\theta$ to be considered is of the form
\begin{equation}\label{28a}
\theta = \frac{1}{2}\cdot \inf_{\alpha\ge 0}\left \{\frac{1}{\alpha}\left [2E_{\mbox{\tiny{0}}}+C_{\mbox{\tiny{N}}}(\alpha\delta)+C_{\mbox{\tiny{N}}}(\alpha\beta)\right ]\right \},
\end{equation}
since it ensures the FA error constraint is kept and minimal degradation is inflicted upon the MD error exponent, which turns out to be
\begin{align}\label{29}
E_{\mbox{\tiny{MD}}}\left (E_{\mbox{\tiny{0}}}, f_{\mbox{\tiny{S}}}\right )=\frac{1}{2}\cdot \sup_{\lambda\geq 0,\; 0<\delta<\beta,\; \alpha\ge 0}\left\{a\lambda \delta+3a\lambda\beta - C_{\mbox{\tiny{V}}}(\lambda\delta)- C_{\mbox{\tiny{V}}}(\lambda\beta)-\frac{\lambda}{\alpha}\left [2E_{\mbox{\tiny{0}}}+C_{\mbox{\tiny{N}}}(\alpha\delta)+C_{\mbox{\tiny{N}}}(\alpha\beta)\right ]\right\}.
\end{align}
Results of the optimization in \eqref{29} for $E_{\mbox{\tiny{0}}}\in[0,15]$ are presented in Figure \ref{fig:Example2ROC} for specific values of $a$, $z_0$ and $q$. In this figure the optimal correlator is compared to ``classical'' correlator, where the correlator weights are proportional to the signal levels, namely $w_t\propto s_t$ or $\beta = 3\alpha$.
\begin{figure}[H]
\centering
\includegraphics[scale=0.5]{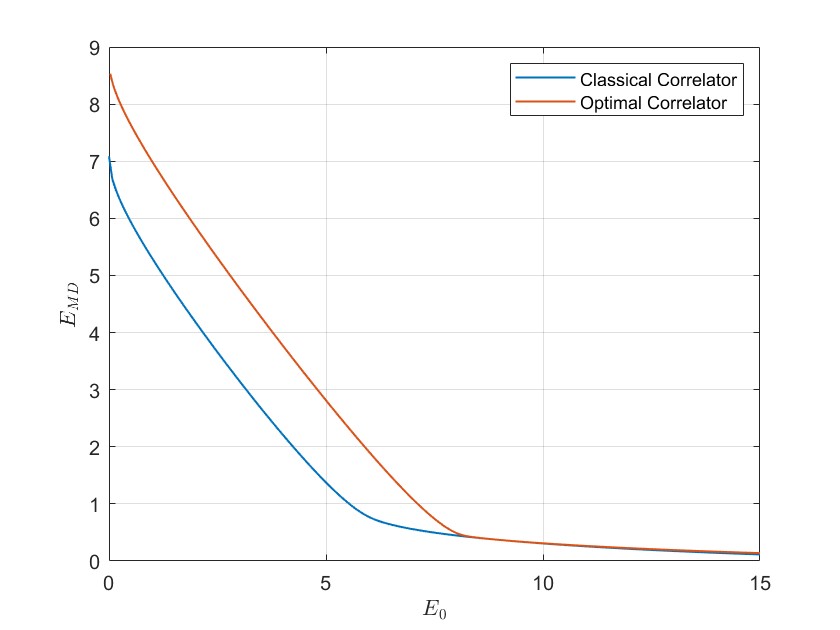}
\caption{MD error exponents as functions of $E_{\mbox{\tiny{0}}}$ for the classical correlator and the optimal correlator with parameter values $a=4$, $z_0=7$ and $q=4$.}
\label{fig:Example2ROC}
\end{figure}

We see in Figure \ref{fig:Example2ROC} that for values of $E_{\mbox{\tiny{0}}}$ smaller than approximately 7.5 the optimal correlator significantly outperforms the classical correlator, but for larger values of $E_{\mbox{\tiny{0}}}$, the optimal correlator bears very little advantage over the classical, sub-optimal one. This is because for high values of $E_{\mbox{\tiny{0}}}$, $E_{\mbox{\tiny{MD}}}$ becomes smaller, and as $E_{\mbox{\tiny{MD}}}\longrightarrow 0$, $\lambda$ in \eqref{29} tends to zero. This in turn, implies that the CGF of $V$ is evaluated close to the origin, where it does not differ significantly from a Gaussian CGF, which is a quadratic function. As for $C_{\mbox{\tiny{N}}}$ , a plausible intuition for the similar characteristics of the classical correlator and the optimal one is that the Laplacian PDF is much `closer' to the Gaussian PDF compared to a binary PDF, for example, and therefore, the mismatch between the Gaussian distribution, to which the classical correlator is matched, and the Laplace distribution, the true distribution in the problem, results in a minor degradation in performance. This phenomenon was also observed in \cite{Neri22}, in Example 4 therein. This concludes Example 1.
\subsection{Joint Optimization of the Signal and the Correlator}\label{Section3.2}
After characterizing the optimal correlator $\{w_t\}$ for a given signal, we proceed to a simultaneous optimization of the correlator and the signal, subject to the power constraint presented in \eqref{PowerConstraint} and the FA error constraint, for a fixed $\lambda$ to be optimized at a later stage.
\begin{flalign}\label{Joint_E_MD}
E_{\mbox{\tiny{MD}}}(E_{\mbox{\tiny{0}}}, \lambda, P_{\mbox{\tiny{s}}})+\lambda\theta &= \sup_{\left\{f_{\mbox{\tiny{S}}}:\;\mathbb{E}[S^2]\leq P_{\mbox{\tiny{s}}}\right\}}\sup_{\{f_{W|S}: E_{\mbox{\tiny{FA}}} \geq {E_{\mbox{\tiny{0}}}}\}}\left\{\mathbb{E}\left [\lambda W\cdot{S}-C_{\mbox{\tiny{V}}}(\lambda{W})\right ]\right\}
& \nonumber \\
&= \sup_{\left \{f_{\mbox{\tiny{W}}}:\; E_{\mbox{\tiny{FA}}} \geq {E_{\mbox{\tiny{0}}}}\right \}}\sup_{\left\{f_{S|W}:\;\mathbb{E}[S^2]\leq P_{\mbox{\tiny{s}}}\right\}}\left\{\mathbb{E}\left [\lambda W\cdot{S}-C_{\mbox{\tiny{V}}}(\lambda{W})\right ]\right\} 
& \nonumber \\
&\stackrel{(a)}{=} \sup_{\left \{f_{\mbox{\tiny{W}}}:\; E_{\mbox{\tiny{FA}}} \geq {E_{\mbox{\tiny{0}}}}\right \}}\left\{\mathbb{E}\left[\lambda W\cdot\sqrt{\frac{P_{\mbox{\tiny{s}}}}{\mathbb{E}[W^2]}}\cdot W - C_{\mbox{\tiny{V}}}(\lambda{W})\right ]\right\}  
& \nonumber \\
&=\sup_{\left \{f_{\mbox{\tiny{W}}}:\; E_{\mbox{\tiny{FA}}} \geq {E_{\mbox{\tiny{0}}}}\right \}}\left\{\lambda\cdot\sqrt{P_{\mbox{\tiny{s}}}\mathbb{E}[W^2]}-\mathbb{E}\left[C_{\mbox{\tiny{V}}}(\lambda{W})\right]\right\},
\end{flalign}
where (a) is due to the simple fact that, for a given $W$ and $P_{\mbox{\tiny{s}}}$, the 
correlation, $\mathbb{E}\left[W\cdot S\right]$ is maximized by $S=\sqrt{P_{\mbox{\tiny{s}}}/\mathbb{E}\left [W^2\right ]}\cdot W$.

Earlier we saw that the optimal correlator coefficients $W$ for a given signal $S$ are related by
\begin{equation}\label{32}
S = g\left (W; \alpha, \varphi, \lambda\right )
\end{equation}
and \eqref{32} must be satisfied for any signal $f_{\mbox{\tiny{S}}}$ and in particular, for one with an empirical PDF $f^*_{\mbox{\tiny{S}}}$ that maximizes the joint MD error exponent presented in \eqref{Joint_E_MD}. On the other hand, we have just seen that $S$ is a linear function of $W$ when it is optimised jointly with $W$. These two results settle together if and only if $W$ takes values only within a set of solutions, $\pazocal{S}(\xi)$, to the equation
\begin{align}\label{51}
\dot{C}_{\mbox{\tiny{V}}}(\lambda{w}) + \frac{\varphi\alpha}{\lambda}\dot{C}_{\mbox{\tiny{N}}}(\alpha{w}) = \xi w,
\end{align}
for some $\xi>0$. We note that $\pazocal{S}(\xi)$ is never an empty set as $w=0$ is always a solution to \eqref{51}.
Assuming that $g$ does not contain any linear segments, $\pazocal{S}(\xi)$ is a discrete set of intersection points between a non linear, monotonically ascending curve $g$ and a linear function $\xi w$. This implies that $f_{\mbox{\tiny{W}}}$ is a discrete PDF over the set of  $K$ values in $\pazocal{S}(\xi)$ ($K\triangleq |\pazocal{S}(\xi)|$), which we assume to be finite. Moreover, it is easy to see that $K$ is odd: For every non-zero $w'\in \pazocal{S}(\xi)$, $-w'$ is also a solution to \eqref{51}, since both $g$ and $\xi w$ are anti-symmetric around the origin. For $K=1$, the only solution is $w=0$ and therefore the test statistic is $\pazocal{T}=0$, independently from the hypothesis and the observations, so we can disregard this case and assume $K\ge 3$.  Now $w^{\mbox{\tiny{(i)}}}\in \pazocal{S}(\xi)$ can be indexed by $-\frac{K-1}{2}\le i \le\frac{K-1}{2}$, where $i$ is an integer and $w^{\mbox{\tiny{(i)}}}=-w^{\mbox{\tiny{(-i)}}}$. For the sake of simplicity of notation, this set will sometimes be denoted as $\{w^{(i)}\}$. The results in \cite{Neri22}, where $N$ was confined to be a Gaussian noise process, are a special case of the results here.

Now, $f_{\mbox{\tiny{W}}}$ can be written down as
\begin{equation}
f_{\mbox{\tiny{W}}}(w;\xi)=\sum_{w^{\mbox{\tiny{(i)}}}\in\pazocal{S}(\xi)}p_{\mbox{\tiny{i}}}\cdot \delta\left (w-w^{\mbox{\tiny{(i)}}}\right ), 
\end{equation}
where $p_i\triangleq \Pr\left \{W=w_{\mbox{\tiny{(i)}}}\right \}$, $-\frac{K-1}{2}\le i \le\frac{K-1}{2}$ and $\delta(\cdot )$ is Dirac's delta function.

We next show is that the vector $\textbf{p}\triangleq\left (p_{\mbox{\tiny{-(K-1)/2}}},p_{\mbox{\tiny{-(K-1)/2}+1}}, \dots, p_{\mbox{\tiny{(K-1)/2}}}\right )$ has a very simple structure. In fact, it has at most three non-zero entries. This property stems from the equivalence of our problem to a linear-programming (LP) problem. Consider the next chains of equalities, starting from the last line of \eqref{Joint_E_MD},
\begin{flalign}\label{34a}
E_{\mbox{\tiny{MD}}}(E_{\mbox{\tiny{0}}}, \lambda, P_{\mbox{\tiny{s}}})+\lambda\theta &=\sup_{\left \{f_{\mbox{\tiny{W}}}:\; E_{\mbox{\tiny{FA}}} \geq {E_{\mbox{\tiny{0}}}}\right \}}\left\{\lambda\cdot\sqrt{P_{\mbox{\tiny{s}}}\mathbb{E}[W^2]}-\mathbb{E}\left[C_{\mbox{\tiny{V}}}(\lambda{W})\right]\right\}
& \nonumber \\
&\stackrel{(a)}{=}\sup_{\left \{f_{\mbox{\tiny{W}}}:\; \sup_{\alpha\ge 0}\left \{E_{\mbox{\tiny{FA}}}(\alpha)\right \} \geq {E_{\mbox{\tiny{0}}}}\right \}}\left\{\lambda\cdot\sqrt{P_{\mbox{\tiny{s}}}\mathbb{E}[W^2]}-\mathbb{E}\left[C_{\mbox{\tiny{V}}}(\lambda{W})\right]\right\}
& \nonumber \\
&=\sup_{\left \{f_{\mbox{\tiny{W}}}:\;\cup_{\{\alpha\geq{0}\}}E_{\mbox{\tiny{FA}}}(\alpha) \geq {E_{\mbox{\tiny{0}}}}\right \}}\left\{\lambda\cdot\sqrt{P_{\mbox{\tiny{s}}}\mathbb{E}[W^2]}-\mathbb{E}\left[C_{\mbox{\tiny{V}}}(\lambda{W})\right]\right\}
& \nonumber \\
&=\sup_{\alpha\ge 0}\sup_{\left \{f_{\mbox{\tiny{W}}}:\;E_{\mbox{\tiny{FA}}}(\alpha) \geq {E_{\mbox{\tiny{0}}}}\right \}}\left\{\lambda\cdot\sqrt{P_{\mbox{\tiny{s}}}\mathbb{E}[W^2]}-\mathbb{E}\left[C_{\mbox{\tiny{V}}}(\lambda{W})\right]\right\}
& \nonumber \\
&=\sup_{\alpha\ge 0}\sup_{\xi> 0}\sup_{\left \{f_{\mbox{\tiny{W}}}(\cdot \;;\; \xi):\;E_{\mbox{\tiny{FA}}}(\alpha) \geq {E_{\mbox{\tiny{0}}}}\right \}}\left\{\lambda\cdot\sqrt{P_{\mbox{\tiny{s}}}\mathbb{E}[W^2]}-\mathbb{E}\left[C_{\mbox{\tiny{V}}}(\lambda{W})\right]\right\}
& \nonumber \\
&=\sup_{\alpha, \xi\ge 0}\sup_{\sigma^2_{\mbox{\tiny{W}}}\ge 0}\sup_{P\ge E_{\mbox{\tiny{0}}}}\left \{\lambda\cdot\sqrt{P_{\mbox{\tiny{s}}}\sigma^2_{\mbox{\tiny{W}}}}-\inf_{\left \{f_{\mbox{\tiny{W}}}(\cdot \;;\; \xi):\;E_{\mbox{\tiny{FA}}}(\alpha) = P,\; \mathbb{E}\left [W^2\right ]=\sigma^2_{\mbox{\tiny{W}}}\right \}}\left\{\mathbb{E}\left[C_{\mbox{\tiny{V}}}(\lambda{W})\right]\right\}\right \},
\end{flalign}
where (a) $E_{\mbox{\tiny{FA}}}(\alpha)\triangleq \alpha\theta-\mathbb{E}\left[C_{\mbox{\tiny{N}}}(\alpha W)\right]$.

The inner-most minimization problem in the last line of \eqref{34a} is indeed a LP problem expressed in standard form, with $m=3$ constraints and $K\ge 3$ variables to optimize:

\begin{equation}\label{LP1}
\begin{aligned}
\min_{\textbf{p}} \quad & \textbf{c}^T\textbf{p}\\
\textrm{s.t.} \quad & \textbf{Bp}= \textbf{d}\\
  &\textbf{p}\ge 0    \\
\end{aligned}
\end{equation}

where $\textbf{c}\triangleq\left (c_{\mbox{\tiny{-(k-1)/2}}}, c_{\mbox{\tiny{-(k-1)/2+1}}}, \dots, c_{\mbox{\tiny{(k-1)/2}}}\right )^T$, $c_i\triangleq C_{\mbox{\tiny{V}}}(\lambda w^{\mbox{\tiny{(i)}}})$, $\textbf{d} \triangleq \left (\alpha\theta-P, \sigma^2_{\mbox{\tiny{W}}}, 1\right )^T$ and
\begin{equation*}
\textbf{B}\triangleq
\begin{bmatrix}
\textbf{B}_{\mbox{\tiny{-(k-1)/2}}} & \textbf{B}_{\mbox{\tiny{-(k-1)/2+1}}} & \cdots &\textbf{B}_{\mbox{\tiny{(k-1)/2}}}
\end{bmatrix},
\end{equation*}
where $\textbf{B}_k\triangleq \left (C_{\mbox{\tiny{N}}}\left (\alpha w^{\mbox{\tiny{(k)}}}\right ), \left (w^{\mbox{\tiny{(k)}}}\right )^2, 1\right )^T$.

We can now apply a known result regarding LP problems: Under the conditions described above, the optimal solution $\textbf{p}^*$ has at most 3 non-zero entries (\textit{a basic solution}) in case the constraints are linearly independent, or less (\textit{a degenerate basic solution}). The interested reader is referred to \cite{LP} for a detailed exposition of the theory of LP.

Also, the search for $\textbf{p}^*$ can confined to vectors which are \textit{symmetric} in the sense that $p^*_{\mbox{\tiny{i}}}=p^*_{\mbox{\tiny{-i}}}$ for $-\frac{K-1}{2}\le i \le\frac{K-1}{2}$. To see why this is true, consider the following consideration: Due to the symmetry of the objective function $\mathbb{E}\left[C_{\mbox{\tiny{V}}}\left (\lambda W\right )\right]$ around the origin, for any vector $\textbf{p}$ we can construct a symmetric vector $\textbf{p}'$, whose $i$'th entry is given by $\frac{p_{\mbox{\tiny{i}}}+p_{\mbox{\tiny{-i}}}}{2}$. The value of the objective function, evaluated at $\textbf{p}$ and $\textbf{p}'$ is identical, thus only symmetric vectors can be considered to solve the linear program in \eqref{LP1}. This property will help us reduce the number of optimized parameters.

Once all the properties of $f^*_{\mbox{\tiny{W}}}$ are put together, it turns out we are looking for a distribution $f_{\mbox{\tiny{W}}}$ of the form
\begin{equation}
f_{\mbox{\tiny{W}}}(v;\xi)= p\cdot \delta\left (v+w\right )+(1-2p)\cdot \delta (v)+p\cdot\delta \left(v-w\right ),
\end{equation}
where $p\triangleq \frac{1-p_0}{2}$, $w\in\pazocal{S}(\xi)$ and $w>0$. It is also evident that such distribution of $W$ and $S$ is consistent with our initial presumption regarding the uniform convergence in \eqref{assumption1} and \eqref{assumption2}, since $W$ and $S$ each have at most three different values which appear at frequencies that are independent of $n$, similarly to the case in \eqref{27} in Example 1. Now, instead of trying to find the terms of $\pazocal{S}(\xi)$, which may be very difficult to carry out in practice, since it involves the evaluation of $g(\cdot , \alpha, \varphi,\lambda)$, where $\varphi$ itself may be a function of $\alpha$ and $\lambda$, we can optimize over $p$ and $w$ directly. A numerically manageable problem, equivalent to \eqref{Joint_E_MD}, is 
\begin{equation}
E_{\mbox{\tiny{MD}}}\left (E_{\mbox{\tiny{0}}}, \lambda, P_{\mbox{\tiny{s}}}\right )=\sup_{\left \{w,p:\; E_{\mbox{\tiny{FA}}}(w,p) \geq {E_{\mbox{\tiny{0}}}},\; w>0,\; 0\le p\le \frac{1}{2}\right \}}\left\{\lambda\cdot\sqrt{2P_{\mbox{\tiny{s}}}p}\cdot w-2p\cdot C_{\mbox{\tiny{V}}}(\lambda{w})\right\}-\lambda\theta,
\end{equation}
and 
\begin{equation}
E_{\mbox{\tiny{FA}}}(w,p)=\sup_{\alpha\ge 0 }\left \{\alpha\theta - 2p \cdot C_{\mbox{\tiny{N}}}(\alpha{w})\right \}.
\end{equation}

\noindent\textbf{Example 2.} 
Returning to the setting presented in Example 1, we shall present the characteristics of a detector that was optimized alongside the transmitted signal and compare it to the correlator that was optimized for the 4-ASK signalling, derived in Example 1. The problem of interest is
\begin{equation}\label{38}
E_{\mbox{\tiny{MD}}}\left (E_{\mbox{\tiny{0}}}, P_{\mbox{\tiny{s}}}\right ) = \sup_{\lambda\ge 0} E_{\mbox{\tiny{MD}}}\left (E_{\mbox{\tiny{0}}}, \lambda, P_{\mbox{\tiny{s}}}\right ),
\end{equation}
where we have set $P_{\mbox{\tiny{s}}}=5a^2$, so the comparison between the graphs obtained in \eqref{29} and \eqref{38} is proper. The results are presented in Figure \ref{fig:Example4ROC}.

\begin{figure}[H]
\centering
 \includegraphics[scale=0.5]{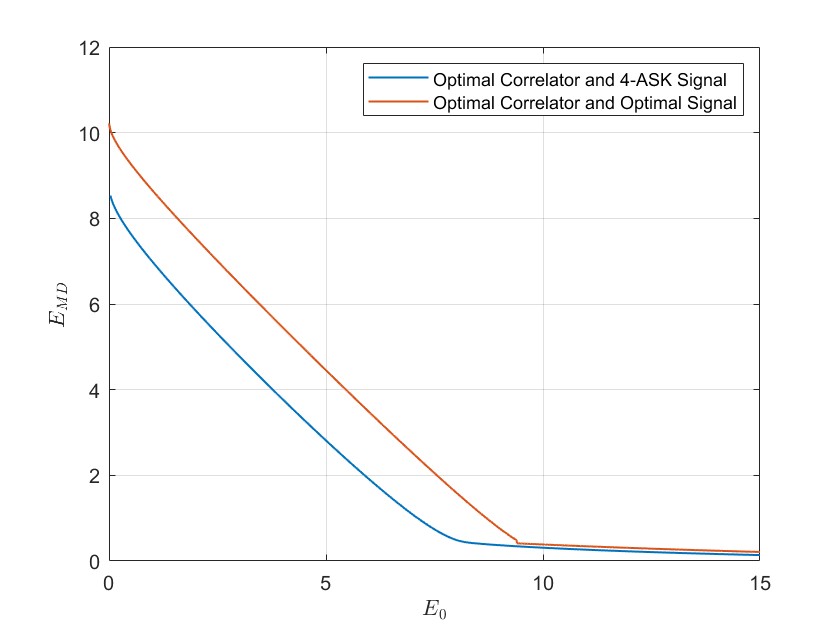}
\caption{MD error exponent as a function of $E_{\mbox{\tiny{0}}}$ for optimal correlators. The lower (blue) curve is when the signal is 4-ASK signal and the upper (red) curve is for the optimal signal for detection with correlation. Parameter values are identical to Example 1.}
\label{fig:Example4ROC}
\end{figure}
In Figure \ref{fig:Example4ROC}, we can see that the joint optimization of the signal and the correlator indeed yielded superior performance than the sub-optimal signal (4-ASK in this example) with an optimal correlator matched to it. The phenomenon explained in Example 1, that for large values of $E_{\mbox{\tiny{0}}}$, the optimal correlators perform very similarly to the classic correlator in the sense of FA to MD error exponent trade-off in the current setting of the problem, was repeated here as well. This concludes Example 2.
\section{Detectors Based on Linear Combinations of Correlation and Energy}\label{Section4}
We now discuss the more general case, where the test statistic to be compared to a threshold is given by
\begin{equation}\label{42}
\pazocal{T}=\sum_{t=1}^nw_tY_t+\gamma\sum_{t=1}^nY_t^2,
\end{equation}
and where $\gamma$ is optimized as opposed to setting $\gamma=0$, as was done in Section \ref{Section3}. We start our analysis by deriving an upper bound for the FA error exponent, similarly as in Section \ref{Section3}, using the Chernoff bound:
\begin{flalign}
P_{\mbox{\tiny{FA}}} &= \Pr\left\{ \sum_{t=1}^{n}w_{t}N_{t}+\gamma N_{t}^2\geq\theta{n} \right\}
& \nonumber \\
& \leq\inf_{\alpha\geq{0}}\left\{ e^{-\alpha\theta{n}}\cdot \prod_{t=1}^n e^{\ln\mathbb{E}\left [\exp\left (\alpha(w_t N_t+\gamma N_t^2\right )\right ]} \right\}
& \nonumber \\
&=\inf_{\alpha\geq{0}}\left\{ \exp\left [-\alpha\theta n+\sum_{t=1}^n \tilde{C}_{\mbox{\tiny{N}}}\left (\alpha w_t, \alpha\gamma\right )\right ] \right\}
& \nonumber \\
&=\exp\left\{ -\sup_{\alpha\geq{0}}\left [\alpha\theta n-\sum_{t=1}^n \tilde{C}_{\mbox{\tiny{N}}}\left (\alpha w_t, \alpha\gamma\right )\right ] \right\}.
\end{flalign}
The FA error exponent is therefore given by
\begin{equation}\label{44}
E_{\mbox{\tiny{FA}}}=\lim_{n\to\infty}\left \{\sup_{\alpha\geq{0}}\left [\alpha\theta-\frac{1}{n}\sum_{t=1}^n \tilde{C}_{\mbox{\tiny{N}}}\left (\alpha w_t, \alpha\gamma\right )\right ]\right \},
\end{equation}
provided that the limit exists. Under the same assumption as in Section \ref{Section3}, regarding the asymptotic behavior of $\textbf{w}$ and $\textbf{s}$, the limit in \eqref{44} exists, and so,
\begin{equation}\label{45}
E_{\mbox{\tiny{FA}}}(\alpha)\triangleq\lim_{n\to\infty}\left \{\left [\alpha\theta-\frac{1}{n}\sum_{t=1}^n \tilde{C}_{\mbox{\tiny{N}}}\left (\alpha w_t, \alpha\gamma\right )\right ]\right \}=\alpha\theta-\mathbb{E}\left [ \tilde{C}_{\mbox{\tiny{N}}}\left (\alpha W, \alpha\gamma\right )\right ],
\end{equation}
allowing to represent \eqref{44} more compactly as
\begin{equation}\label{46}
E_{\mbox{\tiny{FA}}}=\sup_{\alpha\geq{0}}\left \{\alpha\theta-\mathbb{E}\left [\tilde{C}_{\mbox{\tiny{N}}}\left (\alpha W, \alpha\gamma\right )\right ]\right \}.
\end{equation}
We find the upper bound for the MD exponent in a similar way to Section \ref{Section3},
\begin{flalign}\label{47}
P_{\mbox{\tiny{MD}}}&\stackrel{(a)}{=}\Pr\left\{\sum_{t=1}^n w_t(s_t+V_t)+\gamma\sum_{t=1}^n (s_t+V_t)^2<\theta n\right\}
& \nonumber \\
& \le \inf_{\lambda\ge 0}\left\{e^{\lambda n \theta}\cdot \mathbb{E}\left[\exp\left(-\lambda\left (\sum_{t=1}^n w_t(s_t+V_t)+\gamma\sum_{t=1}^n (s_t+V_t)^2\right)\right)\right]\right\}
& \nonumber \\
&= \inf_{\lambda\ge 0}\left\{e^{\lambda n \theta}\cdot \mathbb{E}\left[\exp\left(-\lambda\sum_{t=1}^n w_ts_t\right)\cdot \exp\left (-\lambda\sum_{t=1}^n w_t V_t\right)\cdot \exp\left (-\lambda\gamma\sum_{t=1}^n (s_t+V_t)^2\right)\right]\right\}
& \nonumber \\
&= \inf_{\lambda\ge 0}\Bigg\{e^{\lambda\left(n \theta-\sum_{t=1}^n w_ts_t\right)}\cdot\mathbb{E}\Bigg[\exp\left (-\lambda\sum_{t=1}^n w_t V_t\right)\cdot \exp\left(-\lambda\gamma\sum_{t=1}^n s_t^2\right )
& \nonumber \\
&\quad\quad\quad\cdot \exp\left (-2\lambda\gamma\sum_{t=1}^n s_tV_t\right )\cdot \exp\left(-\lambda\gamma\sum_{t=1}^n V_t^2\right )\Bigg]\Bigg\}
& \nonumber \\
&= \inf_{\lambda\ge 0}\left\{e^{\lambda\left(n \theta-\sum_{t=1}^n \left (w_ts_t+\gamma s_t^2\right )\right)}\cdot\prod_{t=1}^n \mathbb{E}\left[e^{-(\lambda w_t+2\lambda\gamma s_t) V_t-\lambda\gamma V_t^2}\right]\right\}
& \nonumber \\
&= \inf_{\lambda\ge 0}\left\{\exp\left [\lambda\left(n \theta-\sum_{t=1}^n \left (w_ts_t+\gamma s_t^2\right )\right)+\sum_{t=1}^n \tilde{C}_{\mbox{\tiny{V}}}\left (-\left (\lambda w_t+2\lambda\gamma s_t\right ),-\lambda\gamma\right )\right]\right\}
& \nonumber \\
&\stackrel{(c)}{=} \exp\left\{\inf_{\lambda\ge 0}\left [\lambda\left(n \theta-\sum_{t=1}^n \left (w_ts_t+\gamma s_t^2\right )\right)+\sum_{t=1}^n \tilde{C}_{\mbox{\tiny{V}}}\left (\lambda w_t+2\lambda\gamma s_t,-\lambda\gamma\right )\right]\right\}
& \nonumber \\
&= \exp\left\{-n\cdot \sup_{\lambda\ge 0}\left [-\lambda\theta+\frac{1}{n}\sum_{t=1}^n \left(\lambda w_ts_t+\lambda\gamma s_t^2+\tilde{C}_{\mbox{\tiny{V}}}\left (\lambda w_t+2\lambda\gamma s_t,-\lambda\gamma\right )\right )\right]\right\},
\end{flalign}
where in (a) we have defined $V_t\triangleq N_t+Z_t$, (b) is because $\{V_t\}$ is an i.i.d. noise process and (c) follows from the symmetry of $f_{\mbox{\tiny{V}}}$ around the origin.

The MD exponent resulting from \eqref{47} is
\begin{equation}\label{48}
{E}_{\mbox{\tiny{MD}}}\triangleq\lim_{n\to\infty}\left \{\sup_{\lambda\ge 0}\left [-\lambda\theta+\frac{1}{n}\sum_{t=1}^n \left(\lambda w_ts_t+\lambda\gamma s_t^2+\tilde{C}_{\mbox{\tiny{V}}}\left (\lambda w_t+2\lambda\gamma s_t,-\lambda\gamma\right )\right )\right]\right \},
\end{equation}
provided that the limit exists. The previous assumption regarding the asymptomatic behavior of $\textbf{w}$ and $\textbf{s}$ ensures that the limit in \eqref{49} exists and that the convergence therein is uniform in $\lambda\ge 0$ and $\gamma\ge 0$.
\begin{flalign}\label{49}
{E}_{\mbox{\tiny{MD}}}(\lambda)&\triangleq\lim_{n\to\infty}\left \{\left [-\lambda\theta+\frac{1}{n}\sum_{t=1}^n \left(\lambda w_ts_t+\lambda\gamma s_t^2+\tilde{C}_{\mbox{\tiny{V}}}\left (\lambda w_t+2\lambda\gamma s_t,-\lambda\gamma\right )\right )\right]\right \}
& \nonumber \\
&=\mathbb{E}\left [\lambda\left (WS+\gamma S^2-\theta\right )+\tilde{C}_{\mbox{\tiny{V}}}\left (\lambda W+2\lambda\gamma S,-\lambda\gamma\right )\right ],
\end{flalign}
and therefore \eqref{48} is equivalent to
\begin{equation}\label{50}
{E}_{\mbox{\tiny{MD}}}=\sup_{\lambda\ge 0} \left \{\mathbb{E}\left [\lambda\left (WS+\gamma S^2-\theta\right )+\tilde{C}_{\mbox{\tiny{V}}}\left (\lambda W+2\lambda\gamma S,-\lambda\gamma\right )\right ]\right \}.
\end{equation}
It is worthwhile to mention that the class of distributions of $V$ with a finite $\tilde{C}_{\mbox{\tiny{V}}}$ in a neighborhood around the origin is smaller than the one with a finite $C_{\mbox{\tiny{V}}}$ in an interval around the origin. For example, in the setting presented in Examples 1 and 2, where it was assumed that $N$ had a Laplace distribution, \eqref{46} and \eqref{50} are diverging and the Chernoff bounding technique is not applicable any more. In such scenarios, techniques of heavy-tailed large deviations are needed, see for example \cite{LargeDeviation}.
\subsection{Optimal Correlation and Energy Detector for a Given Signal}
We now extend the results of Section \ref{Section3} regarding the asymptotically optimal correlator weights for a given signal for the detector presented in \eqref{42}, for a fixed $\lambda$ and $\gamma$ that will be optimised separately at a later stage. Again, we seek the optimal conditional PDF, $f_{W|S}$, such that, 
\begin{equation}
{E}_{\mbox{\tiny{MD}}}(\lambda)=\int_{-\infty}^\infty f_{\mbox{\tiny{S}}}(s)\cdot \mathbb{E}\left [\lambda\left (Ws+\gamma s^2\right )+\tilde{C}_{\mbox{\tiny{V}}}\left (\lambda W+2\lambda\gamma s,-\lambda\gamma\right )|S=s\right ]\mbox{d}s-\lambda\theta,
\end{equation}
is maximal w.r.t. $f_{W|S}$ for fixed $\lambda$ and $\gamma$, subject to the FA error constraint,
\begin{equation}
{E}_{\mbox{\tiny{MD}}}\ge E_{\mbox{\tiny{0}}},
\end{equation}
where $E_{\mbox{\tiny{0}}}$ was defined in \eqref{19}.
\begin{theorem}\label{Theorem2}
Let the assumptions of Section \ref{Section2}, \ref{Section4} hold and let $N$ be a non degenerate RV. Define the function $\tilde{g}$ as
\begin{equation}
\tilde{g}\left (w;\alpha,\varphi,\lambda, \gamma\right )\triangleq \tilde{C}_{\mbox{\tiny{Vx}}}\left (\lambda w+2\lambda\gamma s ,-\lambda\gamma\right )+\frac{\alpha\varphi}{\lambda}\tilde{C}_{\mbox{\tiny{Nx}}}\left (\alpha w, \alpha\gamma\right ),
\end{equation}
where $\tilde{C}_{\mbox{\tiny{Vx}}}$ and $\tilde{C}_{\mbox{\tiny{Nx}}}$ denote the partial derivatives of $\tilde{C}_{\mbox{\tiny{V}}}(x,y)$ and $\tilde{C}_{\mbox{\tiny{N}}}(x,y)$ w.r.t $x$, respectively. Further assume, that there exists $\varphi\ge 0$ (possibly, dependent on $\lambda$, $\alpha$ and $\gamma$) that satisfies:
\begin{equation}
\mathbb{E}\left [\tilde{C}_{\mbox{\tiny{N}}}\left (\alpha \cdot \tilde{g}^{\mbox{\tiny{-1}}}\left (s; \alpha, \varphi, \lambda, \gamma\right ), \alpha\gamma\right )\right ]=\alpha\theta-E_{\mbox{\tiny{0}}},
\end{equation}
where $\tilde{g}^{\mbox{\tiny{-1}}}$ is the inverse function of $\tilde{g}$ (as a function $w$). If no such $\varphi$ exists, set $\varphi=0$. The optimal conditional PDF is given by
\begin{equation}
f^*_{W|S}(w|s)=\delta\left (w-\tilde{g}^{\mbox{\tiny{-1}}}\left (s; \alpha, \varphi, \lambda, \gamma\right )\right ),
\end{equation}
where $\delta(\cdot )$ is the Dirac delta function and 
\begin{equation}\label{56}
 \alpha=\arg\max_{\alpha\ge 0}\left \{\mathbb{E}\left  [\lambda\tilde{g}^{\mbox{\tiny{-1}}}\left (S;\alpha,\varphi,\lambda, \gamma\right)\cdot S+\lambda\gamma S^2
-\tilde{C}_{\mbox{\tiny{V}}}\left (\lambda \tilde{g}^{\mbox{\tiny{-1}}}\left (S;\alpha,\varphi,\lambda, \gamma\right)+2\lambda\gamma S ,-\lambda\gamma\right )\right  ]\right \}
\end{equation} 
\end{theorem}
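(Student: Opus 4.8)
The plan is to solve the inner maximization over $f_{W|S}$ (for fixed $\lambda,\gamma$) as an infinite-dimensional linear program and read off its Lagrangian/KKT characterization. First I would dispose of the false-alarm constraint exactly as in the chain \eqref{34a}: since $E_{\mbox{\tiny{FA}}}=\sup_{\alpha\ge 0}\{\alpha\theta-\mathbb{E}[\tilde{C}_{\mbox{\tiny{N}}}(\alpha W,\alpha\gamma)]\}$, the set $\{f_{W|S}:E_{\mbox{\tiny{FA}}}\ge E_{\mbox{\tiny{0}}}\}$ is the union over $\alpha\ge 0$ of the sets cut out by the \emph{single} linear inequality $\mathbb{E}[\tilde{C}_{\mbox{\tiny{N}}}(\alpha W,\alpha\gamma)]\le\alpha\theta-E_{\mbox{\tiny{0}}}$, so the problem becomes $\sup_{\alpha\ge 0}$ of an inner problem carrying only that one constraint. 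For that inner problem, observe that the objective $\int f_{\mbox{\tiny{S}}}(s)\,\mathbb{E}[\lambda(Ws+\gamma s^2)-\tilde{C}_{\mbox{\tiny{V}}}(\lambda W+2\lambda\gamma s,-\lambda\gamma)\mid S=s]\,\mbox{d}s-\lambda\theta$ (the sign on $\tilde{C}_{\mbox{\tiny{V}}}$ being the one consistent with the Chernoff step \eqref{47} and with \eqref{56}) and the constraint are \emph{affine} in the family of conditional laws $\mu_s\triangleq f_{W|S=s}$.

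Next I would dualize the single constraint with a multiplier $\varphi\ge 0$; the Lagrangian decouples across $s$, and for each $s$ the inner functional is linear in the probability measure $\mu_s$, so — using the boundedness of $\textbf{w}$ to guarantee the supremum over the admissible range is attained — it is maximized by a point mass $\delta\big(w-w^*(s)\big)$, where $w^*(s)$ maximizes $h_s(w)\triangleq\lambda ws-\tilde{C}_{\mbox{\tiny{V}}}(\lambda w+2\lambda\gamma s,-\lambda\gamma)-\varphi\,\tilde{C}_{\mbox{\tiny{N}}}(\alpha w,\alpha\gamma)$. This already delivers the conditional-Dirac form of $f^*_{W|S}$. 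To identify $w^*(s)$ I would differentiate: $h_s'(w)=\lambda s-\lambda\tilde{C}_{\mbox{\tiny{Vx}}}(\lambda w+2\lambda\gamma s,-\lambda\gamma)-\varphi\alpha\tilde{C}_{\mbox{\tiny{Nx}}}(\alpha w,\alpha\gamma)$, whose zero is precisely $s=\tilde{g}(w;\alpha,\varphi,\lambda,\gamma)$. Since $\tilde{C}_{\mbox{\tiny{V}}}$ and $\tilde{C}_{\mbox{\tiny{N}}}$ are jointly convex (being log-moment-generating functions of $(V,V^2)$ and $(N,N^2)$) and $\varphi\ge 0$, we get $h_s''(w)=-\lambda^2\tilde{C}_{\mbox{\tiny{Vxx}}}-\varphi\alpha^2\tilde{C}_{\mbox{\tiny{Nxx}}}\le 0$, with strictness whenever $\lambda>0$ by non-degeneracy of $V$ (inherited from that of $N$, since $V=N+Z$ with $N\perp Z$); the degenerate case $\lambda=0$ makes ${E}_{\mbox{\tiny{MD}}}(\lambda)=0$ and is discarded. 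Hence $h_s$ has a unique maximizer, $\tilde{g}(\cdot)$ is strictly increasing in $w$ and therefore invertible, and $w^*(s)=\tilde{g}^{-1}(s;\alpha,\varphi,\lambda,\gamma)$ (assuming, as is implicit in the statement, that this stationary point lies in the admissible amplitude range); continuity of $\tilde{g}^{-1}$ supplies the measurability of $s\mapsto w^*(s)$.

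It then remains to pin down $\varphi$ and $\alpha$. By the standard saddle-point sufficiency argument — which requires no convexity of the primal problem — if there is a $\varphi\ge 0$ for which the Dirac solution makes the FA constraint tight, i.e.\ $\mathbb{E}[\tilde{C}_{\mbox{\tiny{N}}}(\alpha\,\tilde{g}^{-1}(S;\alpha,\varphi,\lambda,\gamma),\alpha\gamma)]=\alpha\theta-E_{\mbox{\tiny{0}}}$, then complementary slackness holds, $(f^*_{W|S},\varphi)$ is a Lagrangian saddle point, and $f^*_{W|S}$ is optimal; if no such $\varphi$ exists, the unconstrained (relaxed) optimum is already feasible and $\varphi=0$ is optimal — this is the case anticipated in the second remark following Theorem~\ref{Theorem1}. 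Finally, substituting $w^*(s)=\tilde{g}^{-1}(s;\alpha,\varphi,\lambda,\gamma)$ back into ${E}_{\mbox{\tiny{MD}}}(\lambda)+\lambda\theta$ and carrying the outer $\sup_{\alpha\ge 0}$ from the first step produces the $\arg\max$ formula \eqref{56}. Theorem~\ref{Theorem1} is recovered by setting $\gamma=0$, in which case $\tilde{C}_{\mbox{\tiny{Vx}}}(\lambda w,0)=\dot{C}_{\mbox{\tiny{V}}}(\lambda w)$, $\tilde{C}_{\mbox{\tiny{Nx}}}(\alpha w,0)=\dot{C}_{\mbox{\tiny{N}}}(\alpha w)$, and $\tilde{g}$ collapses to $g$.

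I expect the main obstacle to be making the Lagrangian step fully rigorous — establishing absence of a duality gap and carefully bookkeeping exactly when an admissible $\varphi\ge 0$ exists — rather than the (essentially routine) pointwise optimization and concavity argument. A secondary subtlety is that $\tilde{g}(\cdot;\alpha,\varphi,\lambda,\gamma)$ also depends on $s$ internally, through the argument $2\lambda\gamma s$ of $\tilde{C}_{\mbox{\tiny{Vx}}}$, so ``$\tilde{g}^{-1}(s)$'' must be read as the solution $w$ of the implicit equation $\tilde{g}(w;\cdot)=s$ with that same $s$ appearing on both sides; one should check that this implicit equation has a unique solution, which again follows from the strict monotonicity established above.
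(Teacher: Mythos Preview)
Your proposal is correct and follows essentially the same Lagrangian route as the paper's proof. The only structural difference is the order in which you handle $\alpha$ and the multiplier: you first pull the $\sup_{\alpha\ge 0}$ outside via the union argument of \eqref{34a} and then dualize the single remaining linear constraint, whereas the paper first introduces the multiplier $\rho$ and only afterwards brings $\sup_{\alpha\ge 0}$ out by two successive minimax swaps (steps (a) and (b) in \eqref{57a}); your ordering saves you the swap (a) and its convexity-in-$\alpha$ justification, but otherwise the pointwise maximization, the concavity argument for $h_s$, the invertibility of $\tilde{g}$ (the paper's Lemma~\ref{Lemma3}), and the complementary-slackness selection of $\varphi$ are all handled identically.
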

\begin{proof}
We first show that $g$ is invertible. Due to Lemma \ref{Lemma3} in the Appendix, both $\tilde{C}_{\mbox{\tiny{N}}}$ and $\tilde{C}_{\mbox{\tiny{V}}}$ are strictly convex (see proof therein), so their second partial derivatives w.r.t. $x$ are positive. Thus, $C_{\mbox{\tiny{Nx}}}$ and $C_{\mbox{\tiny{Vx}}}$ are monotonically increasing and so $\tilde{g}$ has an inverse function. Next, consider the following derivation
\begin{flalign}\label{57a}
{E}_{\mbox{\tiny{MD}}}\left (E_{\mbox{\tiny{FA}}},\lambda,\gamma\right )+\lambda\theta &=\sup_{f_{W|S}:\; {E}_{\mbox{\tiny{FA}}} \geq E_{\mbox{\tiny{0}}}}\left\{\int_{-\infty}^\infty f_{\mbox{\tiny{S}}}(s)\cdot\mathbb{E}\left [\lambda\left (Ws+\gamma s^2\right )-\tilde{C}_{\mbox{\tiny{V}}}\left (\lambda W+2\lambda\gamma s ,-\lambda\gamma\right )\right ]\mbox{d}s\right \}
& \nonumber \\
&= \sup_{f_{W|S}}\inf_{\rho\ge 0}\bigg\{\int_{-\infty}^\infty f_{\mbox{\tiny{S}}}(s)\cdot\mathbb{E}\left [\lambda\left (Ws+\gamma s^2\right )-\tilde{C}_{\mbox{\tiny{V}}}\left (\lambda W+2\lambda\gamma s ,-\lambda\gamma\right )\right ]\mbox{d}s
& \nonumber \\
&~~~~~~~~~~~~~~~~+\rho\left [{E}_{\mbox{\tiny{FA}}}- E_{\mbox{\tiny{0}}}\right ]\bigg \}
& \nonumber \\
&= \sup_{f_{W|S}}\inf_{\rho\ge 0}\sup_{\alpha\ge 0}\bigg\{\int_{-\infty}^\infty f_{\mbox{\tiny{S}}}(s)\cdot\mathbb{E}\left [\lambda\left (Ws+\gamma s^2\right )-\tilde{C}_{\mbox{\tiny{V}}}\left (\lambda W+2\lambda\gamma s ,-\lambda\gamma\right )\right ]\mbox{d}s
& \nonumber \\
&~~~~~~~~~~~~~~~~+\rho\left [\alpha\theta-\mathbb{E}\left [ \tilde{C}_{\mbox{\tiny{N}}}\left (\alpha W, \alpha\gamma\right )\right ]-E_{\mbox{\tiny{0}}}\right ]\bigg \}
& \nonumber \\
&\stackrel{(a)}{=} \sup_{f_{W|S}}\sup_{\alpha\ge 0}\inf_{\rho\ge 0}\bigg\{\int_{-\infty}^\infty f_{\mbox{\tiny{S}}}(s)\cdot\mathbb{E}\left [\lambda\left (Ws+\gamma s^2\right )-\tilde{C}_{\mbox{\tiny{V}}}\left (\lambda W+2\lambda\gamma s ,-\lambda\gamma\right )\right ]\mbox{d}s
& \nonumber \\
&~~~~~~~~~~~~~~~~+\rho\left [\alpha\theta-\mathbb{E}\left [ \tilde{C}_{\mbox{\tiny{N}}}\left (\alpha W, \alpha\gamma\right )\right ]-E_{\mbox{\tiny{0}}}\right ]\bigg \}
& \nonumber \\
&= \sup_{\alpha\ge 0}\sup_{f_{W|S}}\inf_{\rho\ge 0}\bigg\{\int_{-\infty}^\infty f_{\mbox{\tiny{S}}}(s)\cdot\mathbb{E}\left [\lambda\left (Ws+\gamma s^2\right )-\tilde{C}_{\mbox{\tiny{V}}}\left (\lambda W+2\lambda\gamma s ,-\lambda\gamma\right )\right ]\mbox{d}s
& \nonumber \\
&~~~~~~~~~~~~~~~~+\rho\left [\alpha\theta-\mathbb{E}\left [ \tilde{C}_{\mbox{\tiny{N}}}\left (\alpha W, \alpha\gamma\right )\right ]-E_{\mbox{\tiny{0}}}\right ]\bigg \}
& \nonumber \\
&\stackrel{(b)}{=} \sup_{\alpha\ge 0}\inf_{\rho\ge 0}\sup_{f_{W|S}}\bigg\{\int_{-\infty}^\infty f_{\mbox{\tiny{S}}}(s)\cdot\mathbb{E}\left [\lambda\left (Ws+\gamma s^2\right )-\tilde{C}_{\mbox{\tiny{V}}}\left (\lambda W+2\lambda\gamma s ,-\lambda\gamma\right )\right ]\mbox{d}s
& \nonumber \\
&~~~~~~~~~~~~~~~~+\rho\left [\alpha\theta-\mathbb{E}\left [ \tilde{C}_{\mbox{\tiny{N}}}\left (\alpha W, \alpha\gamma\right )\right ]-E_{\mbox{\tiny{0}}}\right ]\bigg \}
& \nonumber \\
&\stackrel{(c)}{=} \sup_{\alpha\ge 0}\inf_{\rho\ge 0}\bigg\{\int_{-\infty}^\infty f_{\mbox{\tiny{S}}}(s)\cdot\sup_w\Big [\lambda ws-\tilde{C}_{\mbox{\tiny{V}}}\left (\lambda w+2\lambda\gamma s ,-\lambda\gamma\right )-\tilde{C}_{\mbox{\tiny{N}}}\left (\alpha w, \alpha\gamma\right )
& \nonumber \\
&~~~~~~~~~~~~~~~~+\lambda\gamma s^2\Big ]\mbox{d}s+\rho\left [\alpha\theta - E_{\mbox{\tiny{0}}}\right ]\bigg \}
& \nonumber \\
&\stackrel{(d)}{=} \sup_{\alpha\ge 0}\inf_{\rho\ge 0}\bigg\{\int_{-\infty}^\infty f_{\mbox{\tiny{S}}}(s)\Big [\lambda\tilde{g}^{\mbox{\tiny{-1}}}\left (s;\alpha,\rho,\lambda, \gamma\right)s-\tilde{C}_{\mbox{\tiny{V}}}\left (\lambda \tilde{g}^{\mbox{\tiny{-1}}}\left (s;\alpha,\rho,\lambda, \gamma\right)+2\lambda\gamma s ,-\lambda\gamma\right )
& \nonumber \\
&~~~~~~~~~~~~~~~~+\lambda\gamma s^2-\rho\cdot \tilde{C}_{\mbox{\tiny{N}}}\left (\alpha \tilde{g}^{\mbox{\tiny{-1}}}\left (s;\alpha,\rho,\lambda, \gamma\right), \alpha\gamma\right )\Big ]\mbox{d}s+\rho\left [\alpha\theta - E_{\mbox{\tiny{0}}}\right ]\bigg \}
& \nonumber \\
&\stackrel{(e)}{\le} \sup_{\alpha\ge 0}\bigg\{\int_{-\infty}^\infty f_{\mbox{\tiny{S}}}(s)\Big [\lambda\tilde{g}^{\mbox{\tiny{-1}}}\left (s;\alpha,\varphi,\lambda, \gamma\right)\cdot s-\tilde{C}_{\mbox{\tiny{V}}}\left (\lambda \tilde{g}^{\mbox{\tiny{-1}}}\left (s;\alpha,\varphi,\lambda, \gamma\right)+2\lambda\gamma s ,-\lambda\gamma\right )
& \nonumber \\
&~~~~~~~~~~~~~~~~+\lambda\gamma s^2-\rho\cdot \tilde{C}_{\mbox{\tiny{N}}}\left (\alpha \tilde{g}^{\mbox{\tiny{-1}}}\left (s;\alpha,\varphi,\lambda, \gamma\right), \alpha\gamma\right )\Big ]\mbox{d}s+\rho\left [\alpha\theta - E_{\mbox{\tiny{0}}}\right ]\bigg \}
& \nonumber \\
&= \sup_{\alpha\ge 0}\bigg\{\int_{-\infty}^\infty f_{\mbox{\tiny{S}}}(s)\Big  [\lambda\tilde{g}^{\mbox{\tiny{-1}}}\left (s;\alpha,\varphi,\lambda, \gamma\right)\cdot s-\tilde{C}_{\mbox{\tiny{V}}}\left (\lambda \tilde{g}^{\mbox{\tiny{-1}}}\left (s;\alpha,\varphi,\lambda, \gamma\right)+2\lambda\gamma s ,-\lambda\gamma\right )
& \nonumber \\
&~~~~~~~~~~~~~~~~+\lambda\gamma s^2\Big  ]\mbox{d}s+\rho\left [\alpha\theta - \int_{-\infty}^\infty f_{\mbox{\tiny{S}}}(s)\tilde{C}_{\mbox{\tiny{N}}}\left (\alpha \tilde{g}^{\mbox{\tiny{-1}}}\left (s;\alpha,\varphi,\lambda, \gamma\right), \alpha\gamma\right )\mbox{d}s-E_{\mbox{\tiny{0}}}\right ]\bigg \}
& \nonumber \\
&\stackrel{(f)}{=} \sup_{\alpha\ge 0}\left \{\int_{-\infty}^\infty f_{\mbox{\tiny{S}}}(s)\left  [\lambda\left (\tilde{g}^{\mbox{\tiny{-1}}}\left (s;\alpha,\varphi,\lambda, \gamma\right)s+\gamma s^2\right )-\tilde{C}_{\mbox{\tiny{V}}}\left (\lambda \tilde{g}^{\mbox{\tiny{-1}}}\left (s;\alpha,\varphi,\lambda, \gamma\right)+2\lambda\gamma s ,-\lambda\gamma\right )\right  ]\mbox{d}s\right \}
& \nonumber \\
&= \sup_{\alpha\ge 0}\Big \{\mathbb{E}\Big  [\lambda\tilde{g}^{\mbox{\tiny{-1}}}\left (S;\alpha,\varphi,\lambda, \gamma\right)\cdot S+\lambda\gamma S^2
-\tilde{C}_{\mbox{\tiny{V}}}\left (\lambda \tilde{g}^{\mbox{\tiny{-1}}}\left (S;\alpha,\varphi,\lambda, \gamma\right)+2\lambda\gamma S ,-\lambda\gamma\right )\Big  ]\Big \}
& \nonumber \\
\end{flalign}
where (a) is because the objective function is linear in $\rho$ and convex in $\alpha$, (b) is since the objective function is linear $\rho$ and linear in $ f_{W|S}$, (c) the unconstrained maximum of the conditional expectation is attained when $f_{W|S}$ concentrates all its mass on the maximizing $w$, (d) is because the maximum of a concave function of $w$ is achieved at the point of zero-derivative, $w=\tilde{g}^{\mbox{\tiny{-1}}}\left (s;\alpha,\rho,\lambda, \gamma\right)$, (e) follows the first postulate that $\varphi\ge 0$ and (f) is by the presumption that we set $\varphi$ such that \eqref{21a} holds or we set it to zero.

We have therefore shown that the upper bound on the constrained maximum in the first line of \eqref{57a} is attained by $W=\tilde{g}^{\mbox{\tiny{-1}}}\left (S;\alpha, \varphi, \lambda,\gamma\right )$ with probability one under $f_{W|S}$, where $\alpha$ is defined in \eqref{56}.
\end{proof}
Clearly, Theorem \ref{Theorem1} is a special case of Theorem \ref{Theorem2} where $\gamma=0$.

\noindent\textbf{Example 3.} In this example, a similar setting of Example 1 is kept, except $N$ has a uniform distribution in the interval $[-B, B]$, where $B>0$ is a given constant, rather than a Laplace distribution and the transmitted signal is ternary symmetric, i.e, the $s_t=0$ for half the time and $s_t=\pm a$ along the other half. The CGF of $N$ is given by
\begin{equation}
C_{\mbox{\tiny{N}}}(v)=\ln\left (\frac{\sinh(Bv)}{Bv}\right ).
\end{equation}
Next, we derive the asymptotically optimal correlator for a given signal ($\gamma=0$) and compare it to the asymptotically optimal energy-correlation detector ($\gamma\ne 0$). For $\gamma=0$, $N$ and $V$ are substituted properly in \eqref{29} and $\beta$ is set to $0$. For $\gamma\ne0$, the MD error exponent is given by
\begin{equation}\label{57}
{E}_{\mbox{\tiny{MD}}}(E_{\mbox{\tiny{0}}})=\frac{1}{2}\cdot \sup_{\gamma,\lambda,\delta,\alpha\ge 0}\bigg\{a\lambda\delta-a^2\lambda\gamma-\tilde{C}_{\mbox{\tiny{V}}}(\lambda \delta,-\lambda\gamma)-\tilde{C}_{\mbox{\tiny{V}}}(0,-\lambda\gamma)-\frac{\lambda}{\alpha}[2E_{\mbox{\tiny{0}}}+\tilde{C}_{\mbox{\tiny{N}}}\left (\alpha\delta,\alpha\gamma\right )+\tilde{C}_{\mbox{\tiny{N}}}\left (0,\alpha\gamma\right )]\bigg \},
\end{equation}
where the  moment-generating function (MGF) of $N$ is given by
\begin{equation}
\tilde{M}_{\mbox{\tiny{N}}}(x,y)=\frac{1}{4B}\cdot \sqrt{\frac{\pi}{|y|}}\cdot \exp\left (-\frac{x^2}{4y}\right )\cdot \left\{\begin{matrix}
\textrm{erf}\left (\sqrt{|y|}B-\frac{x}{2\sqrt{|y|}}
\right )+\textrm{erf}\left (\sqrt{|y|}B+\frac{x}{2\sqrt{|y|}}\right ), ~~~~ y<0 \\
\textrm{erfi}\left (\sqrt{y}B+\frac{x}{2\sqrt{y}}\right )+\textrm{erfi}\left (\sqrt{y}B-\frac{x}{2\sqrt{y}}\right ), ~~~~ y>0
\end{matrix}\right.
\end{equation}
and the MGF of $V$ is given by 
\begin{equation}
\tilde{M}_{\mbox{\tiny{V}}}(x,y)=\frac{e^{yz_0^2}}{2}\left [\tilde{M}_{\mbox{\tiny{N}}}\left (x+2yz_0, y\right )\cdot e^{xz_0}+\tilde{M}_{\mbox{\tiny{N}}}\left (x-2yz_0, y\right )\cdot e^{-xz_0}\right ],
\end{equation}
where 
\begin{equation}
\textrm{erf}(x)\triangleq \int_0^x\frac{2}{\sqrt{\pi}}e^{-t^2}\mbox{d}t,\quad \textrm{erfi}(x)\triangleq \int_0^x\frac{2}{\sqrt{\pi}}e^{t^2}\mbox{d}t.
\end{equation}
The maximum allowed FA exponent constraint was substituted similarly to Example 1. Results of the optimization problems in \eqref{29} and \eqref{57} for $E_{\mbox{\tiny{0}}}\in[0,3]$ are presented in Figure \ref{fig:Example5ROC} for specific values of $a$, $z_{\mbox{\tiny{0}}}$ and $B$.
\begin{figure}[H]
\centering
\includegraphics[scale=0.5]{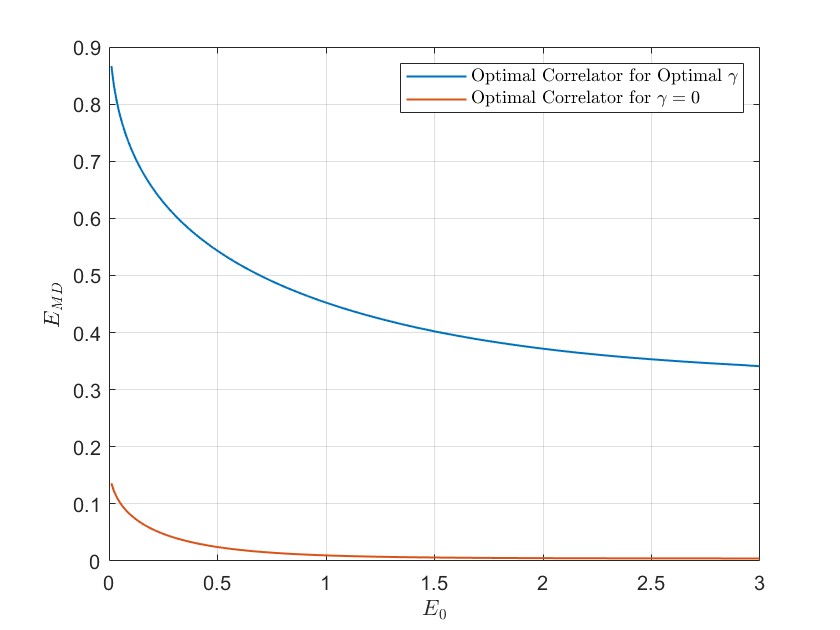}
\caption{MD error exponents as functions of $E_{\mbox{\tiny{0}}}$ for the optimal correlator and the optimal correlation-energy detector with parameter values $a=6$, $z_{\mbox{\tiny{0}}}=7$ and $B=5$.}
\label{fig:Example5ROC}
\end{figure}
As expected, the detector with the optimal weight of the energy term, $\gamma$, outperforms the one without any energy term ($\gamma=0$). We note that for both detectors, a very slow decrease of the MD error exponent is observed for large values of $E_{\mbox{\tiny{0}}}$. In this example, the support of the PDFs of the channel noise and the SIN do not overlap. Thus, for a small enough prescribed FA error probability , $\theta$ converges to a critical threshold $\theta_c$, above which no false alarm errors occur at all. In such a scenario, an infinitesimal increase in $\theta$, results in an infinitesimal decrease of the MD error exponent and a huge increase in the FA error exponent. To demonstrate this phenomenon, we plot $E_{\mbox{\tiny{0}}}$ as a function of $\theta$, and see if there exists $\theta_c$, such that if $\theta\rightarrow\theta_c$, then $E_{\mbox{\tiny{0}}}\rightarrow\infty$. As we approach this problem, we return to \eqref{29} and note the pair $(\delta,\beta,\theta)$ that the receiver designer wishes to find is invariant to scaling. Assume that one calculates the FA and MD error probability for a given $(\delta,\beta,\theta)$. Since all of these parameters are multiplied by $\lambda$ and by $\alpha$ in \eqref{29}, it is easy to see that if $(\delta,\beta,\theta)$ satisfies \eqref{29}, so does $(\delta\cdot x,\beta\cdot x,\theta\cdot x)$ for every $x>0$, as every scaling factor $x$ can be absorbed in the Chernoff parameters. So, for the desired plot to be meaningful, we must eliminate this degree of freedom. We do so by noticing that \eqref{63a} attains the same optimal value as \eqref{29}.
\begin{align}\label{63a}
E_{\mbox{\tiny{MD}}}\left (E_{\mbox{\tiny{0}}}, f_{\mbox{\tiny{S}}}\right )=\frac{1}{2}\cdot \sup_{\lambda,\; \beta,\; \alpha\ge 0}\left\{a\lambda+3a\lambda\beta - C_{\mbox{\tiny{V}}}(\lambda)- C_{\mbox{\tiny{V}}}(\lambda\beta)-\frac{\lambda}{\alpha}\left [2E_{\mbox{\tiny{0}}}+C_{\mbox{\tiny{N}}}(\alpha)+C_{\mbox{\tiny{N}}}(\alpha\beta)\right ]\right\},
\end{align}
where $\theta= \frac{1}{2\alpha}\cdot \left [2E_{\mbox{\tiny{0}}}+C_{\mbox{\tiny{N}}}(\alpha)+C_{\mbox{\tiny{N}}}(\alpha\beta)\right ]$ and in this example we substitute $\beta=0$. The results are shown in Figure \ref{fig:Example5threshold}.
\begin{figure}[H]
\centering
\includegraphics[scale=0.5]{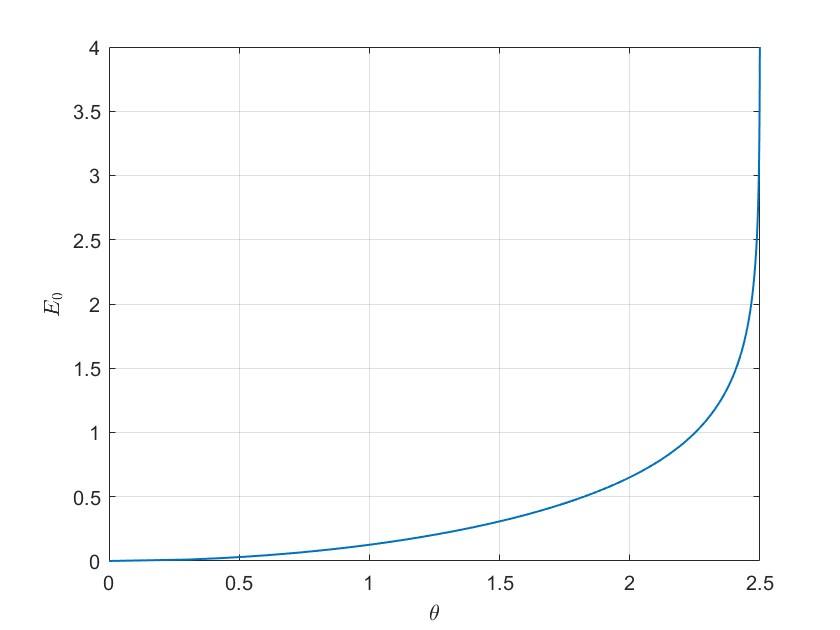}
\caption{The threshold $\theta$ as a function of $E_{\mbox{\tiny{0}}}$ for the classical optimal correlator for a given signal. (same parameter values as Figure \ref{fig:Example5ROC}).}
\label{fig:Example5threshold}
\end{figure}
We see that in this example, $\theta_c=2.5$ for the linear detector.
\subsection{Joint Optimization of the Signal and the Correlator}\label{Section4.2}
In the same spirit of Section \ref{Section3.2}, we now derive the MD error exponent when the signal and the correlator are optimized jointly, for fixed $\lambda$ and $\gamma$ which are later optimized, subject to the power and the FA error constraints.
\begin{flalign}\label{63}
{E}_{\mbox{\tiny{MD}}}(E_{\mbox{\tiny{0}}}, \lambda, \gamma)+\lambda\theta&=\sup_{\left \{f_W:\; {E}_{\mbox{\tiny{FA}}}\ge E_{\mbox{\tiny{0}}}\right \}}\sup_{\left \{f_{S|W}:\; \mathbb{E}[S^2]\le P_{\mbox{\tiny{s}}}\right \}}\left \{\mathbb{E}\left [\lambda W\cdot S+\lambda \gamma S^2-\tilde{C}_{\mbox{\tiny{V}}}(\lambda W+2\lambda\gamma S,-\lambda\gamma)\right ]\right \}
& \nonumber \\
&=\sup_{P\le P_{\mbox{\tiny{s}}}}\sup_{C\ge E_{\mbox{\tiny{0}}}}\sup_{\alpha\ge 0}\sup_{\left \{f_{\mbox{\tiny{WS}}}:\; {E}_{\mbox{\tiny{FA}}}(\alpha)= C,\;\mathbb{E}[S^2]= P\right \}}\left \{\mathbb{E}\left [\lambda W\cdot S+\lambda \gamma S^2-\tilde{C}_{\mbox{\tiny{V}}}(\lambda W+2\lambda\gamma S,-\lambda\gamma)\right ]\right \},
\end{flalign}
where ${E}_{\mbox{\tiny{FA}}}(\alpha)$ in the last line of \eqref{63} was defined in \eqref{45}.

Focussing on the inner-most optimization problem in \eqref{63}, we would like to treat it as a LP, which requires $f_{\mbox{\tiny{WS}}}$ to be discrete. To do so, we will first  consider a discrete approximation of $f_{\mbox{\tiny{WS}}}$, denoted $f'_{\mbox{\tiny{WS}}}$, and then we will show that in the limit of fine quantization, the approximation becomes accurate. Since we assume that the amplitudes of $W$ and $S$ are bounded by $D_{\mbox{\tiny{1}}}$ and $D_{\mbox{\tiny{2}}}$, respectively, adding quantization levels improves the approximation. The support of $f_{\mbox{\tiny{WS}}}(w,s)$ therefore is the rectangle $|w|\le D_{\mbox{\tiny{1}}}, |s|\le D_{\mbox{\tiny{2}}}$ and we propose $f'_{\mbox{\tiny{WS}}}$ of the form
\begin{equation}
f'_{\mbox{\tiny{WS}}}(i,j; k)\triangleq\Pr\left \{W=\frac{i}{k}\cdot D_{\mbox{\tiny{1}}},S=\frac{j}{k}\cdot D_{\mbox{\tiny{2}}}\right \}, \quad -k\le i,j \le k,
\end{equation}
where $k\ge 1$ is a give integer. We denote $w^{\mbox{\tiny{(i)}}}\triangleq\frac{i}{k}\cdot D_{\mbox{\tiny{1}}}$ and $s^{\mbox{\tiny{(j)}}}\triangleq\frac{j}{k}\cdot D_{\mbox{\tiny{2}}}$. The larger is $k$, the more accurate the approximate MD exponent becomes. The original optimization problem over $f_{\mbox{\tiny{WS}}}$ is essentially equivalent to a maximization problem over the $f'_{\mbox{\tiny{WS}}}$ when $k$ is infinitely large. The equivalent problem is in turn a LP with $M=3$ equality constraints:
\begin{enumerate}
\item
$\alpha\theta+\sum_{i=-k}^k \Pr\{W=w^{\mbox{\tiny{(i)}}}\}\cdot \tilde{C}_{\mbox{\tiny{N}}}(\alpha w^{\mbox{\tiny{(i)}}},\alpha\gamma)=C$.
\item
$\sum_{i=-k}^k \Pr\{S=s^{\mbox{\tiny{(i)}}}\}\cdot \left (s^{\mbox{\tiny{(i)}}}\right )^2=P$.
\item
$\sum_{i,j=-k}^k \Pr\{W=w^{\mbox{\tiny{(i)}}}, S=s^{\mbox{\tiny{(j)}}}\}=1$.
\end{enumerate} 
And $\Pr\{W=w^{\mbox{\tiny{(i)}}}, S=s^{\mbox{\tiny{(j)}}}\}\ge 0$ for all $i$, $j$.

We now introduce a square matrix $\textbf{P}\in \mathbb{R}^{(2k+1)\times (2k+1)}$, which its $(i,j)$'th entry is given by $\Pr\{W=w^{\mbox{\tiny{(i)-k-1}}}, S=s^{\mbox{\tiny{(j)-k-1}}}\}$ and the vector $\textbf{p}\in \mathbb{R}^{(2k+1)^2\times 1}$, which is constructed from the columns of $\textbf{P}$ stacked one below the other, starting from left to right, to a single column vector. A matrix $\textbf{A}\in\mathbb{R}^{3\times (2k+1)^2}$ is defined as well,
\begin{equation*}
\textbf{A}\triangleq
\begin{bmatrix}
\textbf{A}_{\mbox{\tiny{-k}}} & \textbf{A}_{\mbox{\tiny{-k+1}}} & \cdots &\textbf{A}_{\mbox{\tiny{k}}}
\end{bmatrix},
\end{equation*}
where $\textbf{A}_{\mbox{\tiny{k}}}\in\mathbb{R}^{3\times (2k+1)}$ is defined as follows,
\begin{equation*}
\textbf{A}_{\mbox{\tiny{k}}}\triangleq
\begin{bmatrix}
c^{\mbox{\tiny{(-k)}}} & c^{\mbox{\tiny{(-k+1)}}} & \cdots &c^{\mbox{\tiny{(k)}}} \\
\left (s^{\mbox{\tiny{(k)}}}\right )^2 & \left (s^{\mbox{\tiny{(k)}}}\right )^2 & \cdots & \left (s^{\mbox{\tiny{(k)}}}\right )^2 \\
1 & 1 & \cdots & 1
\end{bmatrix},
\end{equation*}
where $c^{\mbox{\tiny{(k)}}}\triangleq \tilde{C}_{\mbox{\tiny{N}}}(\alpha w^{\mbox{\tiny{(k)}}},\alpha\gamma)$. We also define $\textbf{b}\triangleq(C-\alpha\theta, P, 1)^T$. Finally, the inner-most optimization problem in \eqref{63} can be presented as a standard LP problem, in the spirit of \eqref{LP1}, and specifically, the equality constraints can be presented in vector form as
\begin{equation}
\textbf{Ap}=\textbf{b}.
\end{equation}

Since $\textbf{A}$ is of rank three at most, the Fundamental Theorem of Linear Programming (see section 2.4 in \cite{LP}) states that it is enough to confine the search for the optimal solution to one with at most three non-zero entries. Also, due to similar arguments presented in Section \ref{Section3.2}, we can restrict our search to symmetric solutions, namely vectors $\textbf{p}$ that sustain $\Pr\left \{W=w, S=s\right \}=\Pr\left \{W=-w, S=-s\right \}$ for some $w$ and $s$ along the grid. We can also determine that $w$ and $s$ have the same sign, since our detector rejects $\pazocal{H}_0$ when the correlation is high. Combining these two properties (and a slight abuse of notation) makes the form of the optimal distribution $f_{\mbox{\tiny{WS}}}$ rather simple:
\begin{equation}
f'_{\mbox{\tiny{WS}}}(u,v; k) = p\cdot \delta(u+w,v+s)+(1-2p)\cdot \delta(u,v)+p\cdot \delta(u-w,v-s),
\end{equation}
where $0\le p \le \frac{1}{2}$, $w,s>0$ are quantization levels and $\delta(\cdot )$ is the Dirac delta function. Finally, instead of considering $f'_{\mbox{\tiny{WS}}}(u,v; k)$, we can optimize the quantization levels $w$ and $s$ directly. Again, we end up with a numerically manageable problem, which is equivalent to \eqref{63}:
\begin{multline}\label{64}
{E}_{\mbox{\tiny{MD}}}(E_{\mbox{\tiny{0}}}, \lambda, \gamma)=\sup_{\left \{0\le p\le \frac{1}{2},s,w:\; {E}_{\mbox{\tiny{FA}}}\ge E_{\mbox{\tiny{0}}},\; \mathbb{E}[S^2]\le P_{\mbox{\tiny{s}}}\right \}}\big \{2\lambda pws+2\lambda \gamma p s^2-2p\cdot \tilde{C}_{\mbox{\tiny{V}}}(\lambda w+2\lambda\gamma s,-\lambda\gamma) 
\\- (1-2p)\cdot \tilde{C}_{\mbox{\tiny{V}}}(0,-\lambda\gamma))-\lambda\theta\big \}
\end{multline}

\noindent\textbf{Example 4.} The setting is the same as in Example 3, except here we shall perform a joint optimization of the signal levels, the correlator weights and the energy coefficient. After substituting the FA constraint into the MD error exponent in \eqref{64} and letting $\gamma$ and $\lambda$ be optimized, we have
\begin{multline}\label{70}
{E}_{\mbox{\tiny{MD}}}(E_{\mbox{\tiny{0}}})=\max_{\alpha,\lambda, \gamma,w,s\ge 0,\; 0\le p\le \frac{1}{2},\; 2ps^2\le P_{\mbox{\tiny{s}}}} \bigg\{2\lambda pws+2\lambda p\gamma s^2-2p\cdot \tilde{C}_{\mbox{\tiny{V}}}(\lambda w+2\lambda\gamma s,-\lambda\gamma)  \\
- (1-2p)\cdot \tilde{C}_{\mbox{\tiny{V}}}(0,-\lambda\gamma)-\frac{\lambda}{\alpha}\left [E_{\mbox{\tiny{0}}} + 2p\cdot \tilde{C}_{\mbox{\tiny{N}}}\left (\alpha w,\alpha\gamma\right ) + (1-2p)\cdot \tilde{C}_{\mbox{\tiny{N}}}(0,\alpha\gamma)\right ]\bigg\}.
\end{multline}
The solution to \eqref{70}, for $E_{\mbox{\tiny{0}}}\in [0,4]$ and same parameter values as in Example 3 are presented in Figure \ref{fig:Example4}. We compare the characteristics of the detector in \eqref{70} to the optimal detector optimized to match a signal $s_t\in\{-4,0,4\}$, where $s_t=\pm 4$ along half the time.
\begin{figure}[H]
\centering
\includegraphics[scale=0.5]{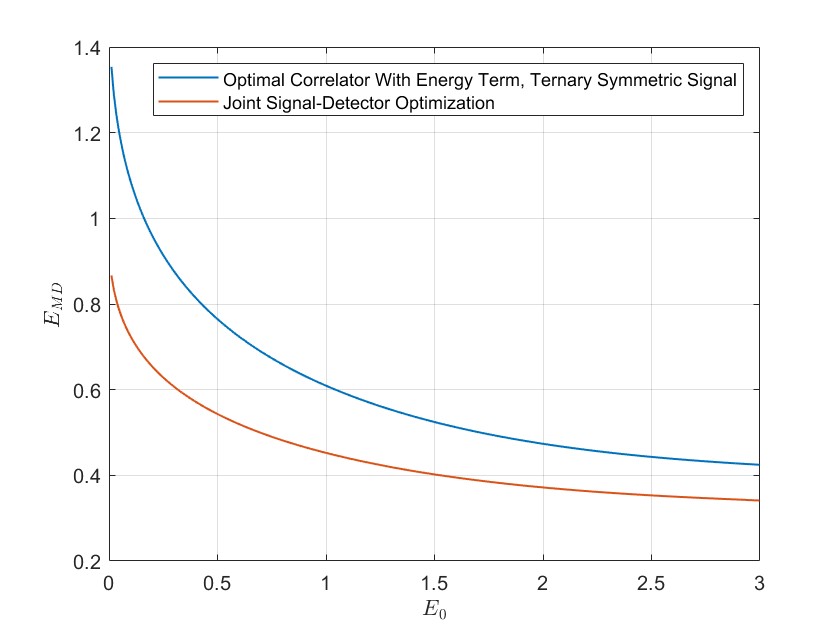}
\caption{MD Error exponents as a function of $E_{\mbox{\tiny{0}}}$ of the optimal correlator (with energy term) and the optimal signal-detector combination (with energy term).}
\label{fig:Example4}
\end{figure}
As expected, joint optimization yielded superior performance.
\section{Conclusion and Future Aspects}
We summarize our findings and propose directions for further exploration.  Initially, we examined the optimal correlator for a given signal within a non-Gaussian noise regime and discovered a non-linear relation between the signal and the correlator weights. Afterwards, by allowing optimization of both $\{s_t\}$ and $\{w_t\}$, we determined that the optimal signal is balanced ternary with $w_t\propto s_t$. Introducing $\gamma\ne 0$ and conducting the joint optimization led us to discover that while $s_t$ and $w_t$ maintain their balanced ternary characteristics, they are not inherently linearly related any more. The resulting optimization problem has 5 positive parameters: the energy coefficient $\gamma$, the non-zero signal level $s$, the corresponding correlator weight $w$, the duty cycle of the signal $p$ and two Chernoff parameters. This reduced parameter set significantly improves the numerical burden of the problem. Our study exhausted the analysis of detectors based on correlation and energy, under the assumptions made regarding the noise distributions.

The first necessary extension to our work is to extend our analysis regarding detectors based on correlation and energy to a broader class of noise distributions. Recall that for the Chernoff bounding technique to be applicable, the CGF of $V^2$ must be finite in an interval around the origin. By using methods related to heavy-tailed large deviations, a broader class of noise distributions can be considered. Another possible extension of our work is to apply the idea of mismatched correlation tests to communication, where each message from a set of size $M>1$ is matched to a waveform $s_i(t)$ along a time interval $0\leq t\leq T$ and transmitted over an additive noisy channel. Each message is transmitted with a prior probability $p_i$ for $1\leq i\leq M$, and all the messages are equal in power, namely, $\frac{1}{T}\int_0^T s_i^2(t) dt = P$ for $1\leq i\leq M$'s . The receiver receives the waveform $X(t)=s_i(t)+N(t)$, and decides which message was transmitted. A classic result in communication theory is that when the noise $N(t)$ is Gaussian, the optimal receiver measures the correlation between $X(t)$ to all $\{s_i(t)\}_{i=1}^M$, and picks the message $i$ that yields the maximal correlation. Assume we insist on using a correlation receiver even when $N(t)$ is not a Gaussian noise process, but some other noise process with a known distribution. We wish to investigate how to set $\textbf{w}$ such that we a obtain minimal error probability within this sub-optimal class. We also note that this problem is intimately related to the subject of mismatched decoding, see \cite{MissmatchedDecoding} and many references therein.
\appendix
\numberwithin{equation}{section}
\section{Appendix}
\begin{lemma}\label{Lemma3}
Let $X$ be a RV with positive variance, such that $\tilde{C}_{\mbox{\tiny{X}}}(a,b)$ is finite and twice-differentiable in some neighborhood of the origin. $\tilde{C}_{\mbox{\tiny{X}}}(a,b)$ is a strictly convex function of $a$ for any fixed $b$.
\begin{proof}
First, assume $b$ is fixed and denote $C(a)=\tilde{C}_{\mbox{\tiny{X}}}(a,b)$. Recall that
\begin{align}
C(a)=\ln\left\{\mathbb{E}\left[e^{aX+bX^2}\right]\right\} = \ln\left\{\int{e^{ax+bx^2}f_{\mbox{\tiny{X}}}(x)\mbox{d}x}\right\}.
\end{align}
The first derivative of $C(a)$ is
\begin{align} \label{17}
\dot{C}(a) = \frac{\int{x\cdot{}e^{ax+bx^2}f_{\mbox{\tiny{X}}}(x)\mbox{d}x}}{\int{e^{ax+bx^2}f_{\mbox{\tiny{X}}}(x)\mbox{d}x}}.
\end{align}
We may define a new PDF, $f(x;a)$,
\begin{align}
f_a(x)\triangleq\frac{e^{ax+bx^2}f_{\mbox{\tiny{X}}}(x)}{\int{e^{ax+bx^2}f_{\mbox{\tiny{X}}}(x)\mbox{d}x}},
\end{align}
that corresponds to $X_a$ which is a non-degenerate RV. Note, that $\dot{C}(a)=\mathbb{E}\left [X_a\right ]$.

We differentiate \eqref{17} again:
\begin{flalign}
\ddot{C}(a)&=\frac{\int{x^2\cdot{}e^{ax+bx^2}f_{\mbox{\tiny{X}}}(x)\mbox{d}x}\cdot\int{e^{ax+bx^2}f_{\mbox{\tiny{X}}}(x)\mbox{d}x}-\int{x\cdot{}e^{ax+bx^2}f_{\mbox{\tiny{X}}}(x)\mbox{d}x}\cdot\int{x\cdot{}e^{ax+bx^2}f_{\mbox{\tiny{X}}}(x)\mbox{d}x}}    {\left(\int{e^{ax+bx^2}f_{\mbox{\tiny{X}}}(x)\mbox{d}x}\right)^2}
& \nonumber \\
&= \frac{\int{x^2\cdot{}e^{ax+bx^2}f_{\mbox{\tiny{X}}}(x)\mbox{d}x}}{\int{e^{ax+bx^2}f_{\mbox{\tiny{X}}}(x)\mbox{d}x}} - \frac{\left(\int{x\cdot{}e^{ax+bx^2}f_{\mbox{\tiny{X}}}(x)\mbox{d}x}\right)^2}{\left(\int{e^{ax+bx^2}f_{\mbox{\tiny{X}}}(x)\mbox{d}x}\right)^2}
& \nonumber \\
&=\mathbb{E}\left[X_a^2\right]-\mathbb{E}\left [X_a\right ]^2
& \nonumber \\
&=\textrm{Var}\{X_a\}>0.
\end{flalign}

\end{proof}
\end{lemma}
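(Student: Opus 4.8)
The plan is to reduce the statement to the classical fact that the cumulant-generating function of a non-degenerate random variable is strictly convex, applied to an exponentially tilted version of $X$. Fix $b$ in the neighborhood of the origin where $\tilde{C}_{\mbox{\tiny{X}}}$ is finite and twice differentiable, and set $C(a)\triangleq\tilde{C}_{\mbox{\tiny{X}}}(a,b)=\ln\mathbb{E}[e^{aX+bX^2}]$. The first step is to observe that finiteness of $\tilde{C}_{\mbox{\tiny{X}}}$ on an open neighborhood of the origin provides the local domination needed to differentiate under the expectation sign, so that $\dot C(a)$ and $\ddot C(a)$ may be obtained by formally differentiating the integrand $e^{ax+bx^2}f_{\mbox{\tiny{X}}}(x)$ with respect to $a$.

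The second step is to introduce the tilted PDF $f_a(x)\triangleq e^{ax+bx^2}f_{\mbox{\tiny{X}}}(x)/\mathbb{E}[e^{aX+bX^2}]$, which is well defined since the normalizing constant is finite and strictly positive, and to let $X_a$ denote a random variable with PDF $f_a$. A direct computation of the first two derivatives then gives $\dot C(a)=\mathbb{E}[X_a]$ and $\ddot C(a)=\mathbb{E}[X_a^2]-(\mathbb{E}[X_a])^2=\textrm{Var}\{X_a\}$, exactly as in the familiar derivation of convexity of an ordinary CGF.

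The third step is to establish strict positivity, $\textrm{Var}\{X_a\}>0$. Here the hypothesis that $X$ has positive variance enters: it implies that $f_{\mbox{\tiny{X}}}$ is not concentrated on a single point, and since $f_a$ and $f_{\mbox{\tiny{X}}}$ share the same support (they differ only by the strictly positive factor $e^{ax+bx^2}$ together with a normalization), $X_a$ is non-degenerate as well, whence $\textrm{Var}\{X_a\}>0$. Consequently $\ddot C(a)>0$ for every admissible $a$, which is precisely strict convexity of $a\mapsto\tilde{C}_{\mbox{\tiny{X}}}(a,b)$ for the fixed $b$.

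I expect the only genuinely delicate point to be the justification of differentiation under the integral; everything else is routine bookkeeping. This can be dispatched by the standard argument invoking finiteness of $\tilde{C}_{\mbox{\tiny{X}}}$ on a full two-dimensional neighborhood of the origin, which yields uniform local integrability of $x\,e^{ax+bx^2}f_{\mbox{\tiny{X}}}(x)$ and $x^2 e^{ax+bx^2}f_{\mbox{\tiny{X}}}(x)$ and hence legitimizes both differentiations.
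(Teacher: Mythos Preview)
Your proposal is correct and follows essentially the same approach as the paper: fix $b$, introduce the exponentially tilted density $f_a$, and identify $\ddot C(a)$ with $\textrm{Var}\{X_a\}>0$. If anything, your write-up is slightly more careful than the paper's, since you explicitly justify differentiation under the integral and spell out why $X_a$ inherits non-degeneracy from $X$ via the shared support.
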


\clearpage

\end{document}